\newcolumntype{C}[1]{>{\centering\arraybackslash}p{#1}}
\pgfplotsset{width=6cm, compat = 1.15}
\theoremstyle{plain}
\newtheorem{theorem}{Theorem}[section]
\newtheorem{prop}[theorem]{Proposition}
\newtheorem{lemma}[theorem]{Lemma}
\theoremstyle{definition}
\newtheorem{remark}[theorem]{Remark}
\newtheorem{definition}[theorem]{Definition}
\renewcommand{\geq}{\geqslant}
\renewcommand{\leq}{\leqslant}
\newcommand{\ts}{\hspace{0.5pt}}
\newcommand{\RR}{\mathbb{R}\ts}
\newcommand{\PP}{\mathbb{P}\ts}
\newcommand{\NN}{\mathbb{N}}
\newcommand{\EE}{\mathbb{E}}
\newcommand{\cB}{\mathcal{B}}
\newcommand{\cC}{\mathcal{C}}
\newcommand{\cE}{\mathcal{E}}
\newcommand{\cG}{\mathcal{G}}
\newcommand{\cO}{\mathcal{O}}
\newcommand{\cP}{\mathcal{P}}
\newcommand{\cR}{\mathcal{R}}
\newcommand{\cT}{\mathcal{T}}
\newcommand{\one}{\mathbbm{1}}
\newcommand{\Psireco}{\Psi_{\text{rec}}}
\newcommand{\Psisel}{\Psi_{\text{sel}}}
\newcommand{\Tsel}[2]{{T^{#2}_{\textnormal{sel},#1}}}
\newcommand{\rhosep}{{\varrho_{\textnormal{sep}}}}
\newcommand{\nosep}{{\textnormal{ns}}}
\newcommand{\rhonosep}{{\varrho_\textnormal{ns}}}
\newcommand{\owl}[1]{\,\overline{\! #1 \!}\,}
\newcommand{\ee}{\mathrm{e}}
\newcommand{\dd}{\, \mathrm{d}}
\definecolor{gre}{rgb}{.06,.49,0.03} 
\DeclareMathOperator{\exponential}{Exp}
\DeclareMathOperator{\id}{id}
\DeclareMathOperator*{\argmax}{arg\,max}
\newcommand{\defeq}{\mathrel{\mathop:}=}
\newcommand{\eqdef}{=\mathrel{\mathop:}}
\definecolor{green2}{rgb}{0.0,0.8125,0.0}
\newcommand*{\bigs}[1]{{\hbox{$\left#1\vbox to36\p@{}\right.\n@space$}}}
\begin{document}

\title[Selection, recombination, and the ancestral initiation graph] {Selection, recombination, and the ancestral initiation graph}

\author{Frederic Alberti, Carolin Herrmann, and Ellen Baake}

\address{\{Faculty of Mathematics, Faculty of Technology\}, Bielefeld University, 
\hspace*{\parindent}Postbox 100131, 33501 Bielefeld, Germany}
\email{\{falberti,ebaake\}@math.uni-bielefeld.de}

\address{Institute of Biometry and Clinical Epidemiology, Charit\'{e} - Universit\"atsmedizin Berlin,
\hspace*{\parindent} Charit\'{e}platz 1, 10117 Berlin, Germany}
\email{carolin.herrmann@charite.de}

\begin{abstract}
Recently, the selection-recombination equation with a single selected site and an arbitrary number of neutral sites was solved by \citet{Selrek} by means of the ancestral selection-recombination graph. Here, we introduce a more accessible approach, namely the ancestral initiation graph. The construction is based on a discretisation of the selection-recombination equation. We apply our method to systematically explain a long-standing observation concerning the dynamics of  linkage disequilibrium between two neutral loci hitchhiking along with a selected one. In particular, this clarifies the nontrivial dependence on the position of the selected site.
\end{abstract}

\maketitle

\emph{keywords:} selection-recombination differential
equation; ancestral initiation graph; linkage disequilibrium; hitchhiking; population genetics.

\section{Introduction}

The \emph{recombination equation} is a large nonlinear dynamical system that describes the evolution of the distribution of genetic types within an infinite population under the influence of \emph{recombination}, that is, the reshuffling of genetic information that occurs in the process of meiosis during the formation of germ cells (or gametes) in sexually reproducing populations.
Since its introduction by \citet{jennings}, \citet{robbins}, and \citet{geiringer}, the recombination equation has posed a major challenge to mathematical population geneticists. It was finally solved by~\citet{haldane} by considering a backward-time partitioning process that describes the random ancestry of a single individual. 

The logical next step was to attack the \emph{selection}-recombination equation, which describes the evolution under the additional influence of natural selection. This was previously considered unsolvable; in fact, the monograph by \citet{akin} starts with the words `The differential equations which model the action of selection and recombination are nonlinear equations which are impossible to solve explicitly'. While we do not  challenge this statement in its generality, \citet{Selrek} \emph{did} derive an explicit solution in the special case of a single selected site linked to a number of neutral sites, with single crossovers between the sites; this is particularly relevant in the context of   \emph{hitchhiking} \citep{MaynardSmithHaigh}, that is, the increase in frequency of neutral alleles linked to a beneficial mutation at the selected site. 
An approximate version of this selection-recombination equation was solved by \citet{stephansonglangley} for the case of two neutral loci linked to the selected one, with two alleles at each of the three loci. The  solution, which involves the incomplete Beta function, displays an interesting behaviour, which depends on whether the selected locus lies outside or between the neutral ones. While the approximation seems to be well justified in the parameter regime considered, where selection is much stronger than recombination, the resulting solutions are not easy to interpret, let alone generalise.

In contrast, \citet{Selrek} have recently obtained an \emph{exact} recursive solution of the full nonlinear system, again for a single selected locus and single-crossover recombination, but an \emph{arbitrary} number of neutral loci and an arbitrary position of the selected locus within the sequence. This solution involves intricate probabilistic constructions, based on the \emph{ancestral selection-recombination graph} by \citet{DonnellyKurtz} (see also \citet{LessardKermany}), as well as a generalisation of the notion of product measure. On a more abstract level, the authors proved  formal dualities between the solution of the selection-recombination equation and various stochastic processes with clear genealogical meaning.

The purpose of the present article is to complement this work in a number of ways. First, we will assume for the proofs (without loss of generality) that the selected locus is the first locus in the sequence, which eases  geometric intuition; we will indicate how this generalises to an arbitrary position of the selected site via an appropriate relabelling of the sites. Secondly, we introduce a novel \emph{ancestral initiation graph}, which arises naturally via discretisation of the selection-recombination equation and simplifies the genealogical arguments based on the ancestral selection-recombination graph. 

We apply our methods and results to the evolution of linkage disequilibrium between two linked neutral loci in the context of genetic hitchhiking. Similar to \citet{stephansonglangley} and \citet{PfaffelhuberLehnertStephan}, we consider two different geometries, with the selected locus located either in between or outside the two neutral loci. By a suitable reparametrisation, we give a unified treatment of both situations. While \citet{stephansonglangley} provide a purely numerical 
illustration of the time course, and   \citet{PfaffelhuberLehnertStephan} consider the more static picture with a focus on the structure of linkage disequilibrium at fixed times close to the time of fixation of the beneficial allele, we arrive at a thorough understanding, as well as a genealogical interpretation, of  the full  dynamics over time.

This paper is organised as follows. First, we recall the selection-recombination equation, along with the surrounding concepts  (Section~\ref{sec:SRE}).
Then, in Section~\ref{sec:stochastic}, we introduce the ancestral initiation graph, both in discrete and continuous time, and relate it to the constructions introduced by \citet{Selrek}. Subsequently, we use the ancestral initiation graph to give a probabilistic proof of the recursive solution; this is complemented by a more algebraic proof, which stays closer to the underlying discretisation scheme. In Section~\ref{sec:LDs}, we discuss the application to the dynamics of linkage disequilibrium; we close by discussing possible extensions and limitations of our approach in Section~\ref{sec:outlook}. 

\section{The selection-recombination equation and its solution}
\label{sec:SRE}
The \emph{selection-recombination equation} is a system of ordinary differential equations describing the evolution of the genotype distribution in an infinitely large, haploid population. Equivalently, one may consider a diploid population in the absence of dominance (that is, with fitness additive across gametes) and in Hardy--Weinberg equilibrium, and work at the level of gametes.  The finite set
$S \defeq \{1,\ldots,n\}$  
represents the genetic loci or \emph{sites} of interest.

We assume that there are two possible alleles at each site, denoted by $0$ and $1$. Thus, we think of \emph{$($genetic$)$ types} as binary sequences of length $n$, i.e. elements $x = (x^{}_1, \ldots, x^{}_n)$ of the \emph{type space}
$X = \{0,1\}^n$.
Since we will later also consider the evolution of (marginal) type distributions defined over subsets of loci, we define, for any nonempty $U \subseteq S$ and $x \in X$, the corresponding \emph{marginal type} \mbox{$x^{}_U \defeq (x^{}_i)^{}_{i \in U} \in  \{0,1\}^U \eqdef X_U$}, where $X_U$ is the \emph{marginal type space}.

We identify a population with its \emph{type distribution}, a probability measure (or vector) \mbox{$\nu = \big ( \nu(x) \big )_{x \in X} \in \cP(X)$}, where $\cP(X)$ denotes the set of all probability measures on $X$;  so, \mbox{$\nu(\{x\}) \geqslant 0$} denotes
the proportion of individuals of type $x$ in the population and
$\nu(E) := \sum_{x \in E} \nu(\{x\})$ for $E \subseteq X$.
We  often abbreviate $\nu(\{x\})$ as $\nu(x)$. Clearly, 
$\| \nu \| \defeq \| \nu \|_1 = \sum_{x \in X} \nu(x) =1$.

We define the \emph{marginal distribution} $\nu_U^{}$ of $\nu$ with respect to $U \subseteq S$ via
\begin{equation*}
\nu_U^{}(E) \defeq \nu (E \times X^{}_{S \setminus U}) \quad \text{for all } E \subset X_U.
\end{equation*}
In particular, for $x \in X_U$, $\nu^{}_U(x)=\nu(x, *)$, where we use `$*$'  as a shorthand for $X_{S \setminus U}$ (so $\nu(x, *)= \nu \big (\{x\} \times X_{S \setminus U}\big ) = \sum_{y \in X_{S \setminus U}} \nu(x, y)$). 

We describe the evolution of the type distribution by a time-dependent family $\omega =  ( \omega_t^{} )^{}_{t \geqslant 0}$ of distributions on $X$ (so that $\omega^{}_t(x)$ is the proportion of type $x$ at time $t$) that satisfies the \emph{selection-recombination equation} (SRE)
\begin{equation}\label{recoselequationprelim}
\dot{\omega^{}_t} = \Psi_{\text{sel}}(\omega^{}_t) + \Psi_{\text{reco}}(\omega^{}_t) \eqdef \Psi(\omega^{}_t);
\end{equation} 
here, the operators $\Psi_{\text{sel}}$ and $\Psi_{\text{reco}}$ describe the (independent) action of selection and recombination, which we now describe in more detail.
Regarding selection, we assume that the fitness of an individual\footnote{The reader should keep in mind that the following individual-based description merely serves to illustrate the model. We stress that we are working here with an infinite population in a law of large numbers regime. In particular, we neglect resampling. For the details of the connection between the finite-population Moran model and the SRE, see~\cite{Selrek}.} is determined by its allele at a single, fixed, site $i_\bullet \in S$, which we call the \emph{selected site}; an individual of type $x$ is \emph{fit} if $x_{i_\bullet}^{} = 0$ and \emph{unfit} if $x_{i_\bullet}^{} = 1$.  Fit individuals reproduce at rate $1+s$ with $s >0$, while unfit ones reproduce at rate 1, so $s$ is the selective advantage.  We write $f(\nu) \defeq \nu_{i_\bullet}^{}(0)$ for the proportion of fit individuals in a population $\nu$. In addition, we write $b(\nu)$ ($d(\nu)$) for the type distribution \emph{within the subpopulation of fit \textnormal{(}unfit\textnormal{)} individuals}. That is, $b(\nu)$ ($d(\nu)$) is the type distribution of an individual sampled from the population $\nu$, conditional on being fit (unfit).
More formally, $b(\nu)$ and $d(\nu)$ can be defined via
\begin{equation}\label{alternative}
f(\nu) b(\nu)(x) =  (1 - x^{}_{i_\bullet}) \nu(x) \eqdef F(\nu)(x)
\end{equation}
and
\begin{equation*}
\big ( 1 - f(\nu) \big ) d (\nu)(x)  \defeq x^{}_{i_\bullet} \nu(x) = \nu(x) - f(\nu) b(\nu) (x) = \nu(x) - F(\nu)(x),
\end{equation*}
respectively (note that the map $\nu \mapsto F(\nu)$ thus defined is linear). When an individual reproduces, its single offspring inherits the parent's type and replaces a randomly chosen individual in the population. The net effect of the aforementioned difference $s$ in the reproduction rate is that each individual in the population $\nu$ gets replaced, at total rate $sf(\nu)$, by a random fit individual. Thus, the selection part in Eq.~\eqref{recoselequation} reads
\begin{equation}\label{psisel}
\Psi_{\text{sel}}(\nu) = s f(\nu) \big (b(\nu)  - \nu \big ) = s \big (F(\nu) - f(\nu)\nu \big);
\end{equation}
note that  only  the difference between the reproduction rates enters $\Psi_{\text{sel}}$, because the baseline reproduction (which occurs at rate $1$) cancels out. 
\begin{remark}\label{negative selection}
In the formulation of the selection term, we assumed positive selection, i.e, $s > 0$. There would be no difficulty in allowing $s < 0$ in the sequel. However, this would not be a true generalisation, as it would merely switch the roles of $0$ and $1$ at the selected site; recall that only the difference in the reproduction rate matters. There is, however, a more subtle reason to stick with $s > 0$; it is well known and fundamental for the treatment in \cite{Selrek} that the solution of the selection equation is connected to a Yule process with branching rate $s$ (see also Remark~\ref{connectiontoessASRG}). Clearly, this only makes sense for $s > 0$. \hfill $\diamondsuit$
\end{remark}

Regarding recombination, we will restrict ourselves to single crossovers. For any $i \in S \setminus i_\bullet^{} \eqdef S^\circ$, we assume that, at rate
$\varrho_i^{} \geqslant 0$, new offspring are produced by two parents, so that the crossover occurs at site $i$. This means that the sequence of the offspring can be thought of as being fragmented, at the crossover site $i$, into two contiguous blocks, which we denote by $C_i$ and $D_i$ and call the ($i$-)head and ($i$-)tail, respectively. The tail starts at (and includes the) site $i$, while the head is defined as the complement of the tail and contains the selected site; it is called the \emph{head} because it contains all the information relevant for the fitness of the individual. More explicitly, we define
\begin{equation*}
(C_i,D_i) = 
\begin{cases}
\big ( [i+1,\ldots,n],[1,\ldots,i] \big ) \textnormal{ if } i < i_\bullet^{}, \\
\big ( [1,\ldots,i-1],[i,\ldots,n] \big ) \textnormal{ if } i > i_\bullet^{} ;
\end{cases}
\end{equation*}
the underlying mental picture is that recombination at site $i$ separates site $i$ from $i_\bullet^{}$, which leads us to exclude $i_\bullet^{}$ from the set of possible crossover sites. This way, we address the sites rather than the links between them (which would otherwise be more natural), which allows us to take the particular role of $i_\bullet^{}$ into account when formulating the recombination process.

We assume that each offspring individual inherits its alleles at the sites in $C_i$ from one parent, and those at the sites in $D_i$ from the other. Therefore, the offspring's type will be $x$ if the marginal types of its parents with respect to $C_i$ and $D_i$ are given by $x_{C_i}^{}$ and $x_{D_i}^{}$. Assuming random mating, the parents can be thought of as independent samples from the current population $\nu$, so that the offspring is of type $x$ with probability 
$\nu(\ast,x_{C_i}^{}) \nu(x_{D_i}^{},\ast)$, assuming $i < i_\bullet^{}$ without loss of generality (more precisely, the probability is $\nu( x_{C_i}^{},\ast) \nu(\ast,x_{D_i}^{})$ for $i > i_\bullet^{}$).
Put differently, the distribution of the offspring's type is given by
the product measure
\begin{equation}\label{defrecombinator}
\cR_i^{}(\nu) \defeq \nu_{C_i}^{} \otimes \nu_{D_i}^{}, \quad i \in S^\circ,
\end{equation}
where the operator $\cR_i^{} : \cP(X) \to \cP(X)$ thus defined is  called a \emph{recombinator} \citetext{\citealp{haldane,recoreview}; see also \citealp{canada}}.
The recombination part in Eq.~\eqref{recoselequationprelim} therefore reads
\begin{equation}\label{psireco}
\Psi_{\text{reco}}(\nu) = \sum_{i \in S^\circ} \varrho_i^{} \big ( \cR_i^{}(\nu) - \nu \big ).
\end{equation}
Putting \eqref{recoselequationprelim}, \eqref{psisel}, and \eqref{psireco} together, the \emph{selection-recombination equation} reads 
\begin{equation}\label{recoselequation}
\dot{\omega^{}}_t = s \big (F(\omega_t^{}) - f(\omega_t^{})\omega_t^{} \big) + \sum_{i \in S^\circ} \varrho_i^{} \big ( \cR_i^{}(\omega_t^{}) - \omega_t^{} \big ).
\end{equation}

\citet{Selrek} constructed the solution $\omega=(\omega^{}_t)^{}_{t \geqslant 0}$ of  \eqref{recoselequation} recursively via the family of solutions $\omega^{(k)}_{} = \big ( \omega_t^{(k)} \big)_{t \geqslant 0}^{}$ where $0 \leqslant k < n$, which interpolate between the solution of the pure selection-equation ($k=0$) and the solution of the SRE ($k=n-1$). For $i_\bullet^{} = 1$, $\omega^{(k)}$ solves the SRE \emph{truncated} at site $k$, 
\begin{equation}\label{recoselequationtruncated}
\dot \omega_t^{(k)} =  \Psi^{(k)} (\omega_t^{(k)}) \quad \text{with } \quad \Psi^{(k)} \defeq \Psisel + \Psireco^{(k)} \quad \text{and }
\Psireco^{(k)} \defeq \sum_{i=1}^{k} \varrho_{i+1}^{} \big ( \cR_{i+1}^{} - \id \!\big),
\end{equation}
with the same initial condition $\omega_0^{(k)} = \omega_0^{}$ for all $k$. We see that for $k = 0$, Eq.~\eqref{recoselequationtruncated} reduces to the pure selection equation $\dot \omega^{}_t = \Psi^{}_{\text{sel}}(\omega^{}_t)$, for $0<k<n-1$, sites $k+1, \ldots, n$ are `glued together', and for $k = n-1$, we recover Eq.~\eqref{recoselequation}. 

The following is~Theorem~5.4 of \citet{Selrek} in the special case $i_\bullet^{} = 1$.
\begin{theorem} \label{recursion}
For all $0 < k < n$, the solutions $\omega^{(k)}$ of Eq.~\eqref{recoselequationtruncated} satisfy the recursion
\begin{equation*}
\omega_t^{(k)} = \ee^{- \varrho_{k+1} t}\omega_t^{(k-1)} + \omega_{C_{k+1},t}^{(k-1)} \otimes \int_{0}^t \varrho_{k+1} \ee^{- \varrho_{k+1} \tau} \omega_{D_{k+1},\tau}^{(k-1)} \dd \tau,
\end{equation*}
starting with the solution
\begin{equation*}
\omega_t^{(0)} = \frac{\ee^{st} F(\omega_0^{}) + (\id - F)(\omega_0^{})}{\ee^{st} f(\omega_0) + (1 - f(\omega_0))}  \eqdef \varphi_t^{}(\omega_0^{})
\end{equation*} 
of the pure selection equation, whose flow we denote by $\varphi = (\varphi_t^{})_{t \geqslant 0}^{}.$  \qed
\end{theorem}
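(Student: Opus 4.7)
The plan is to verify the claimed identity by direct substitution into the truncated SRE~\eqref{recoselequationtruncated}: letting $\widetilde{\omega}^{(k)}_t$ denote its right-hand side, I would show that $\dot{\widetilde{\omega}}^{(k)}_t = \Psi^{(k)}(\widetilde{\omega}^{(k)}_t)$ and $\widetilde{\omega}^{(k)}_0 = \omega^{}_0$, then invoke uniqueness for the (Lipschitz) initial value problem to conclude. The base case $k = 0$ only requires checking by differentiation that $\varphi^{}_t(\omega^{}_0)$ solves $\dot\omega^{}_t = \Psisel(\omega^{}_t)$, which is elementary.

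The first key step is to record a few compatibility properties of the split $S = C^{}_{k+1} \sqcup D^{}_{k+1}$. Since the selected site $i^{}_\bullet = 1$ lies in the head $C^{}_{k+1}$, the maps $f$ and $F$ only see the head: for $\mu^{}_1 \in \cP(X^{}_{C_{k+1}^{}})$ and $\mu^{}_2 \in \cP(X^{}_{D_{k+1}^{}})$, one has $f(\mu^{}_1 \otimes \mu^{}_2) = f(\mu^{}_1)$ and $F(\mu^{}_1 \otimes \mu^{}_2) = F(\mu^{}_1) \otimes \mu^{}_2$, whence $\Psisel(\mu^{}_1 \otimes \mu^{}_2) = \Psisel(\mu^{}_1) \otimes \mu^{}_2$. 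Analogously, each recombinator $\cR^{}_j$ with $2 \leqslant j \leqslant k$ (those appearing in $\Psireco^{(k-1)}$) factors as $\cR^{}_j(\mu^{}_1 \otimes \mu^{}_2) = \cR^{(C_{k+1}^{})}_j(\mu^{}_1) \otimes \mu^{}_2$, where $\cR^{(C_{k+1}^{})}_j$ is the corresponding recombinator on $\cP(X^{}_{C_{k+1}^{}})$. A direct consequence is that $\omega^{(k-1)}_{C_{k+1}^{},t}$ itself solves the truncated SRE on the smaller site set $C^{}_{k+1}$, which I will use when differentiating.

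With these preparations, abbreviate $\mu^{}_t \defeq \omega^{(k-1)}_t$ and $I^{}_t \defeq \int_0^t \varrho^{}_{k+1} \ee^{-\varrho^{}_{k+1} \tau} \mu^{}_{D_{k+1}^{},\tau} \dd \tau$, so that $\widetilde{\omega}^{(k)}_t = \ee^{-\varrho^{}_{k+1} t} \mu^{}_t + \mu^{}_{C_{k+1}^{},t} \otimes I^{}_t$ is a mixture of total mass $\ee^{-\varrho^{}_{k+1} t} + (1 - \ee^{-\varrho^{}_{k+1} t}) = 1$ whose $C^{}_{k+1}$-marginal equals $\mu^{}_{C_{k+1}^{},t}$. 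A short bookkeeping yields
\begin{equation*}
\varrho^{}_{k+1}\big(\cR^{}_{k+1}(\widetilde{\omega}^{(k)}_t) - \widetilde{\omega}^{(k)}_t\big) = \varrho^{}_{k+1}\ee^{-\varrho^{}_{k+1} t} \big(\cR^{}_{k+1}(\mu^{}_t) - \mu^{}_t \big),
\end{equation*}
while the compatibility properties of the first step expand $\Psisel(\widetilde{\omega}^{(k)}_t) + \Psireco^{(k-1)}(\widetilde{\omega}^{(k)}_t)$ along the mixture into $\ee^{-\varrho^{}_{k+1} t} \Psi^{(k-1)}(\mu^{}_t) + \Psi^{(k-1)}\!\big(\mu^{}_{C_{k+1}^{},t}\big) \otimes I^{}_t$, with the second instance of $\Psi^{(k-1)}$ interpreted as the analogous operator on $X^{}_{C_{k+1}^{}}$. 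Differentiating $\widetilde{\omega}^{(k)}_t$ directly by the product rule, using $\dot I^{}_t = \varrho^{}_{k+1} \ee^{-\varrho^{}_{k+1} t} \mu^{}_{D_{k+1}^{},t}$, $\dot\mu^{}_t = \Psi^{(k-1)}(\mu^{}_t)$, and the analogous equation for $\mu^{}_{C_{k+1}^{},t}$, one arrives at the same expression, and $\widetilde{\omega}^{(k)}_0 = \omega^{}_0$ is immediate since $I^{}_0 = 0$.

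The main obstacle is the careful tracking of marginals and the bookkeeping of how the mixture-plus-product structure of $\widetilde{\omega}^{(k)}_t$ interacts with each term of $\Psi^{(k)}$; once the compatibility properties of the first step are in place, what remains is the product rule and linear algebra. A more conceptual route, deferred to Section~\ref{sec:stochastic}, would interpret the two summands in $\widetilde{\omega}^{(k)}_t$ backward in time as conditioning on whether no crossover at site $k+1$ occurs up to time $t$ or a (last) such crossover occurs at some time therein, after which head and tail evolve independently~-- the picture made rigorous by the ancestral initiation graph.
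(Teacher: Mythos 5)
Your proposal is correct, but it takes a genuinely different route from the paper. The paper gives two proofs, both of which \emph{derive} the recursion rather than verify it: a genealogical one, in which $\Gamma^{(k-1)}$ is embedded into the ancestral initiation graph $\Gamma^{(k)}$ and one conditions on whether a $(k+1)$-splitting hits the ancestral line of site $k+1$ (the leftmost such splitting time being exponential with rate $\varrho^{}_{k+1}$), and an algebraic one that works with the discretised flow $\widetilde T^{(k)}_h$ of Eq.~\eqref{deftildeThk}, encodes its powers as a Markov chain on operators $A^i, B^{i,j}$, and passes to the limit $h \to 0$. You instead substitute the closed-form right-hand side directly into the truncated SRE~\eqref{recoselequationtruncated} and invoke Lipschitz uniqueness; I checked the two key identities you state --- that $\varrho^{}_{k+1}\bigl(\cR^{}_{k+1}(\widetilde\omega^{(k)}_t)-\widetilde\omega^{(k)}_t\bigr)=\varrho^{}_{k+1}\ee^{-\varrho^{}_{k+1}t}\bigl(\cR^{}_{k+1}(\omega^{(k-1)}_t)-\omega^{(k-1)}_t\bigr)$, using that the $C^{}_{k+1}$-marginal of $\widetilde\omega^{(k)}_t$ is $\omega^{(k-1)}_{C_{k+1},t}$, and the expansion of $\Psi^{(k-1)}$ along the mixture --- and both are sound. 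The one ingredient you leave implicit and should state explicitly is that $\Psi^{(k-1)}$ is affine on the set of probability measures sharing a fixed $C^{}_{k+1}$-marginal (the infinitesimal analogue of the paper's Eq.~\eqref{linearity'}); your factorisation properties over products $\mu^{}_1\otimes\mu^{}_2$ alone do not give the mixture expansion, since both $\Psisel$ and the recombinators are quadratic. Your compatibility facts otherwise coincide with the paper's Eqs.~\eqref{secondobservation}, \eqref{psiselproduct}, and the marginalisation consistency it cites, so the technical core overlaps; what differs is the assembly. Your route is shorter and more elementary --- no stochastic construction, no discretisation, no limit --- but it only verifies a formula one must already know, whereas the paper's constructions produce the formula and supply the genealogical picture that is then reused in Section~\ref{sec:LDs} (Lemma~\ref{omegansasymptotics} and Proposition~\ref{separatingrecombination} are proved via the AIG), a point your closing remark rightly concedes.
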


That $\omega_t^{(0)}$ indeed satisfies the pure selection equation can be verified  by a straightforward computation. Our goal is to prove the recursion for $0< k < n$, using a stochastic representation of $\omega$, which is related to Eq.~\eqref{recoselequation} via discretisation; in contrast, \citet{Selrek} relied on an underlying approximation by stochastic models for finite populations, and the corresponding duals.

\smallskip

\textbf{Arbitrary selected site.} Let us now generalise this recursion to an arbitrary choice of $i_\bullet^{} \in S$.
The idea is to relabel the indices in such a way that in any step of the iteration, the corresponding tail is not subdivided by any crossover event considered up to and including this step, but instead only separated from the selected site as an intact entity. We achieve this by considering relabellings that are nondecreasing with respect to the following partial order, meaning that they move outward from the selected site without leaving holes. 
\begin{definition}\label{porder}
For two sites $i,j \in S$, we say that $i$ precedes $j$, or $i \preccurlyeq j$, if either $i_\bullet \leq i \leq j$ or $i_\bullet \geq i \geq j$. We write $i \prec j$ if $i \preccurlyeq j$ and $i \neq j$. It is easy to check that the $i$-tail is, for $i \in S^\circ$ and independently of the position of $i$ with respect to $i_\bullet^{}$, given by
\begin{equation*}
D_i = \{j \in S  :  i \preccurlyeq j\}, 
\end{equation*}  
the set of sites that succeed $i$, including $i$ itself. Again, the $i$-head $C_i$ is the complement of the $i$-tail, $C_i \defeq S \setminus D_i = \owl{D_i}$ (throughout, the overbar will denote the complement with respect to $S$); see Figure \ref{headsntails}. Note that the  $i$-head always  contains the selected site. 
\end{definition}

\begin{figure}
\includegraphics[width = 0.85 \textwidth]{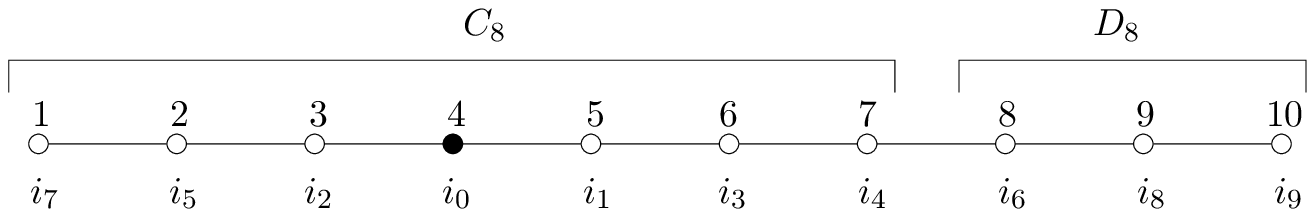}
\caption{\label{headsntails} A sequence of length 10 with selected site $i_\bullet=4$, an instance of head and tail, and relabelling of sites required for the recursive solution (the original site labels are at the top); see the text for more. 
}
\end{figure}


As announced above, we fix a nondecreasing (in the sense of the partial order from Definition \ref{porder}) relabelling  $(i_k)_{0 \leq k <n}$ of $S$ (cf. Fig.~\ref{headsntails}); note that this always forces $i_0 = i_\bullet^{}$, but otherwise it is, in general, not unique. For $0<k<n$, we then denote the corresponding heads, tails, and recombination rates by upper indices, that is,  $C^{(k)} \defeq  C_{i_k}$, $D^{(k)} \defeq  D_{i_k}$, $\cR^{(k)} = \cR_{i_k}$, and $\varrho^{(k)} \defeq \varrho^{}_{i_k}$. The SRE truncated at site $0 \leqslant k<n$ of \eqref{recoselequationtruncated} now turns into
\begin{equation}\label{recoselequationtruncated_gen}
\dot \omega_t^{(k)} =  \Psi^{(k)} (\omega_t^{(k)}) \quad \text{with } \quad \Psi^{(k)} \defeq \Psisel + \Psireco^{(k)} \quad \text{and }
\Psireco^{(k)} \defeq \sum_{\ell=1}^{k} \varrho^{(\ell)} \big ( \cR^{(\ell)} - \id \! \big),
\end{equation}
again with initial condition  $\omega_0^{(k)} = \omega_0^{}$  for all $k$. 
Finally, for an arbitrary position of $i_\bullet^{}$, the recursion of Theorem~\ref{recursion} reads \citep[Thm.~5.4]{Selrek}
\begin{equation}\label{rec_general}
\omega_t^{(k)} = \ee^{- \varrho^{(k)} t} \omega_t^{(k-1)} + \omega_{C^{(k)},t}^{(k-1)} \otimes  \int_{0}^t \varrho^{(k)} \ee^{- \varrho^{(k)} \tau} \omega_{D^{(k)},\tau}^{(k-1)} \dd \tau, \quad 0<k<n.
\end{equation} \hfill $\diamondsuit$

Upon a first read-through, the reader may want to restrict themself to the case $i_\bullet^{} = 1$, to ease geometric intuition. However, the following constructions do not depend on this choice.

\section{The ancestral initiation graph}
\label{sec:stochastic}
\label{sec:asrg}

The proof of \eqref{rec_general} (and Theorem~\ref{recursion} as a special case)
by \citet{Selrek} relied on a probabilistic interpretation of Eq.~\eqref{recoselequation} via a variant of the 
\emph{ancestral selection-recombination graph} (ASRG) \emph{in the deterministic limit}~\citep[Sec.~7]{Selrek}. The fundamental idea is to trace back the (random) ancestry of a fixed individual, by combining the ancestral selection graph (ASG) introduced by \citet{KundN} and the ancestral recombination graph (ARG) \citep{Hudson83,griffithsmarjoram96,griffithsmarjoram97}, adapted to the law of large numbers regime. However, as the arguments based upon this construction are somewhat delicate, we will instead proceed via
discretisation of the selection-recombination equation, explaining the genealogical content of our constructions as we go along.

We start by considering an Euler scheme for Eq.~\eqref{recoselequation}. Given a step length $h >0$, we set
\begin{equation}\label{definitionofT}
T_h^{}(\nu) \defeq \nu +  h \Psi(\nu)
\end{equation}
for all $\nu \in \cP(X)$. The operator $T_h^{}$ thus defined represents one step in the classical Euler method of numerical integration. The semigroup generated by it can be thought of as a discrete approximation to the flow of Eq.~\eqref{recoselequation}. Due to the Lipschitz continuity of this equation (cf.~\citet[Prop.~1]{haldane} for the Lipschitz continuity of the recombinators), standard results yield
\begin{equation}\label{eulerschemeoriginal}
\lim_{h \to 0} T^{\lfloor \frac{t}{h} \rfloor}_h (\omega_0^{}) = \omega_t^{}
\end{equation}
in norm, uniformly in $\omega_0^{}$ and locally uniformly in $t$; see \citet[Thm.~212A]{Butcher}, where \mbox{$T_h^n \defeq \underbrace{T_h^{} \circ \ldots \circ T_h^{}}_{n \; \text{times}}$} is the $n$-th power of $T_h^{}$.

Writing out the definition of $\Psi$,
\begin{equation*}
T_h^{}(\nu) = \Big ( \Tsel{h}{}(\nu) - h \! \sum_{i \in S^\circ} \varrho_i^{} \nu \Big ) + h \! \sum_{i \in S^\circ} \varrho_i^{} \cR_i^{} (\nu),
\end{equation*}
where
\begin{equation}\label{Tseldefinition}
\Tsel{h}{}(\nu) \defeq \nu + h \Psi_{\text{sel}}(\nu).
\end{equation}

Due to the nonlinearity of the recombinators and of $\Tsel{h}{}$ (via $f(\nu) \nu$), computing the powers of $T_h^{}$ as required in Eq.~\eqref{eulerschemeoriginal} is not an easy task. However, our life is made easier by the fact that $\Tsel{h}{}$
is linear on the level sets of $f$, that is,
\begin{equation} \label{linearity}
\Tsel{h}{}(\alpha \mu + (1 - \alpha) \mu') = \alpha \Tsel{h}{}(\mu) + (1- \alpha) \Tsel{h}{}(\mu')
\end{equation}
for all $\mu, \mu' \in \cP(X)$ with $f(\mu) = f(\mu')$ and $\alpha \in [0,1]$, which can be seen as follows. For each $\lambda \in [0,1]$, $\Tsel{h}{}$ is, on the level set $f^{-1}(\lambda)$, given by $(1 - hs\lambda) \id + hs F$. This is immediate from Eqs.~\eqref{psisel} and \eqref{Tseldefinition}; recall that $F$ is linear. 

 Biologically speaking, this is because the fitness of an individual depends only on the allele at the selected site; while $\Tsel{h}{}$ changes the relative sizes of the two subpopulations consisting of fit and unfit individuals respectively,  the type composition \emph{within} each subpopulation is not affected. Put differently, $\Tsel{h}{}$ acts in the same way on \emph{all} sequences in each subpopulation, increasing the weight of each fit sequence by a factor of $1+hs$, leaving the weight of each unfit sequence unchanged, and normalising by the total weight $1 + hsf(\nu)$ (this reflects that the the total population size is kept constant since offspring replace randomly chosen individuals, irrespective of their type). This changes the proportion of fit individuals 
\begin{equation*}
\text{from } \; f(\nu) \; \text{ to  } \; \frac{(1+hs)f(\nu)}{1 + hsf(\nu)}
\end{equation*}
and the proportion of unfit individuals 
\begin{equation*}
\text{from } \;  1 - f(\nu) \; \text { to } \; \frac{1 - f(\nu)}{1+ hs f(\nu)}.
\end{equation*}
In order to take advantage of \eqref{linearity}, we rewrite $T_h^{}$ as 
\begin{equation}\label{rewritingT}
T_h^{}(\nu) = \Big (1 - h \! \sum_{i \in S^\circ} \varrho_i^{} \Big ) \Tsel{h}{}(\nu) + h \! \sum_{i \in S^\circ} \varrho_i^{} \Tsel{h}{}(\nu)_{C_i}^{} \otimes \nu_{D_i}^{} + \cO(h^2),
\end{equation}
where the implied constant is uniform in $\nu$; note that $f$ agrees on all explicit summands. Setting
\begin{equation}\label{Ttilde}
\widetilde{T}_h^{} (\nu) \defeq\Big (1 - h \! \sum_{i \in S^\circ} \varrho_i^{} \Big ) \Tsel{h}{}(\nu) +  h \! \sum_{i \in S^\circ} \varrho_i^{} \Tsel{h}{}(\nu)_{C_i}^{} \otimes \nu_{D_i}^{},
\end{equation}
we see that 
\begin{equation}\label{eulerscheme}
\lim_{h \to 0} \widetilde{T}^{\lfloor \frac{t}{h} \rfloor}_h (\omega_0^{}) = \omega_t^{},
\end{equation}
again uniformly in $\omega_0^{}$ and locally uniformly in $t$,
which is an easy consequence of Eq.~\eqref{eulerschemeoriginal} and the fact that $\widetilde T_h^{} = T_h^{} + \cO(h^2)$. Indeed, since $T_h^{}$ maps probability measures into probability measures, which have norm 1 by definition, we have, for all $h>0$, $m \in \NN$, and some $K >0$  that
\begin{equation*}
\|\widetilde T_h^m - T_h^m \| = \| (T_h^{} + \cO(h^2))^m - T_h^m\| \leqslant \sum_{j = 1}^m \binom{m}{j} K^j h^{2j} = (1 + Kh^2)^m - 1.
\end{equation*}
Thus, setting $m = \lfloor \frac{t}{h} \rfloor$ and assuming $t < \tau < \infty$, we get 
\begin{equation*}
\limsup_{h \to 0} \|\widetilde T_h^{\lfloor \frac{t}{h} \rfloor} - T_h^{\lfloor \frac{t}{h} \rfloor} \| \leqslant   \limsup_{h \to 0} (1 + Kh^2)^{\lfloor \frac{t}{h} \rfloor} - 1 \leqslant \limsup_{h \to 0} (1 + Kh^2)^{\frac{\tau}{h^2}h} - 1 = \limsup_{h \to 0} \ee^{K \tau h} - 1 = 0,
\end{equation*}
thus proving Eq.~\eqref{eulerscheme}.
As mentioned above, the advantage of working with $\widetilde T_h^{}$ rather than $T_h^{}$ is that $f$ agrees on all summands so that Eq.~\eqref{linearity} facilitates the computation of higher powers. 

We will be guided by the following genealogical interpretation of the approximate Euler scheme $\widetilde T_h^{}$. Assume that we sample a random individual, whom we will call `Bob', from the population $\omega_t^{}$, and determine his type. We distinguish two possibilities.
\begin{enumerate}[label=(\roman*)]
\item \label{item:noreco}
With (approximate) probability $1 - h \! \sum_{i \in S^\circ} \varrho_i^{}$, Bob has a \emph{single} ancestor at time $t-h$. In this case, only selection plays a role and Bob's type is approximately (that is, up to order $h^2$) distributed according to 
$\Tsel{h}{}(\omega_{t-h}^{})$.
\item \label{item:yesreco}
With probability $h \varrho_i^{}$ for all $i \in S^\circ$, Bob has \emph{two} ancestors, whose sequences performed a single crossover at site $i$. In this case (due to random mating), the alleles at the sites of Bob's sequence that are contained in $C_i$ are independent\footnote{That different ancestors imply independence is due to the deterministic limit considered here, more precisely to the absence of coalescence of ancestral lineages; cf. Section~\ref{sec:outlook}.} of those at the sites contained in $D_i$. As only the $i$-head, $C_i$, is affected by selection (since it contains the selected site), Bob's type is approximately distributed according to $\Tsel{h}{}(\omega_{t-h}^{})^{}_{C_i} \otimes \omega_{D_i,t-h}^{}$.
\end{enumerate}
All other possibilities, such as multiple crossovers, occur with a probability of order $h^2$ and are thus neglected.

To compute powers of $\widetilde T_h^{}$ then means to trace Bob's ancestry further into the past, eventually expressing the distribution of his type in terms of the initial type distribution of the population. As a first step, we consider the situation that until backward time $mh$, $m \in \NN$, no recombination occurred in Bob's ancestry so that his type is (approximately) distributed according to $\Tsel{h}{m}(\omega_{t-mh})$. Now, we look one step further into the past, distinguishing whether a crossover has occurred, which happens with the same probability as above.
\begin{enumerate}[label=(\alph*)]
\item \label{item:selnoreco}
If no recombination occurred in this step either, Bob's type is distributed as $\Tsel{h}{m+1}(\omega_{t - (m+1)h}^{})$. 
\item \label{item:selyesreco}
However, if recombination \emph{did} occur, say, at site $i$, then the head and tail are independent, as in case~\ref{item:yesreco}. Moreover, as only the site $i_\bullet^{} \in C_i$ is under selection, selection only acts along the ancestry associated with $C_i$. The tail $D_i$ on the other hand is contributed by a different, independent ancestral individual, replacing the original instance of $D_i$ that would otherwise hitchhike along with $C_i$. Thus, the distribution of Bob's type is in this case given by 
$
\Tsel{h}{m+1}(\omega_{t - (m+1)h}^{})_{C_i}^{} \otimes \omega_{D_i,t - (m+1)h}^{}.
$
\end{enumerate}
\begin{remark}\label{rem:physical_link}
Strictly speaking, for any given individual, its tail is obviously always linked to the corresponding head along which it hitchhikes. The statement in \ref{item:selyesreco} is therefore to be understood in a purely statistical sense; that is, it only refers to type \emph{distributions}, not individuals. Put differently, the head and tail are \emph{physically linked}, as they  belong to the same individual, looking \emph{forward} in time. However, they are \emph{statistically unlinked} due to having independent ancestors \emph{backward} in time.
\end{remark}
Formally, looking one step further into the past amounts to approximating $\Tsel{h}{m}(\omega_{t - mh}^{})$ by $\Tsel{h}{m} \big (\widetilde T_h^{} (\omega_{t - (m+1)h}^{}) \big )$. Thus, the preceding discussion can be formalised as follows.
\begin{lemma}\label{restrictionislinear}
For all $\nu \in \cP(X)$ and all $m \in \NN$,
\begin{equation*}
\Tsel{h}{m}\big (\widetilde{T}_h^{}(\nu) \big ) = \Big ( 1 - h \! \sum_{i \in S^\circ} \varrho_i^{} \Big ) \Tsel{h}{m+1}(\nu) + h \! \sum_{i \in S^\circ} \varrho_i^{}  \Tsel{h}{m+1}(\nu)^{}_{C_i} \otimes \nu^{}_{D_i}.
\end{equation*}
\end{lemma}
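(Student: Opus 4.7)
The plan is to proceed by induction on $m$. The base case $m = 0$ is precisely the definition~\eqref{Ttilde} of $\widetilde T_h^{}$. For the inductive step, I apply $\Tsel{h}{}$ to both sides of the identity for $m$; the task then reduces to distributing $\Tsel{h}{}$ over the right-hand side so that the exponent on each summand increases from $m+1$ to $m+2$.

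The subtlety is that $\Tsel{h}{}$ is not linear in general. The key observation for circumventing this is that every summand appearing on the right-hand side lies in the same level set of $f$: since $i_\bullet^{} \in C_i^{}$ for all $i \in S^\circ$, marginalising onto $C_i^{}$ leaves the allele distribution at $i_\bullet^{}$ untouched, as does tensoring with any distribution on $X_{D_i}^{}$. Hence $f\bigl(\Tsel{h}{m+1}(\nu)_{C_i}^{} \otimes \nu_{D_i}^{}\bigr) = f\bigl(\Tsel{h}{m+1}(\nu)\bigr)$, so that iterating~\eqref{linearity} (which, for $h$ small enough so that $h\sum_{i \in S^\circ} \varrho_i^{} \leqslant 1$, genuinely operates on a convex combination) permits pulling $\Tsel{h}{}$ inside the sum.

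It then remains to identify $\Tsel{h}{}\bigl(\Tsel{h}{m+1}(\nu)_{C_i}^{} \otimes \nu_{D_i}^{}\bigr)$ with $\Tsel{h}{m+2}(\nu)_{C_i}^{} \otimes \nu_{D_i}^{}$. I would record two easy facts about the selection operator: (a) $\Tsel{h}{}(\mu \otimes \mu') = \Tsel{h}{}(\mu) \otimes \mu'$ for $\mu \in \cP(X_{C_i}^{})$ and $\mu' \in \cP(X_{D_i}^{})$, which follows from substituting the factorisation $F(\mu \otimes \mu')(x) = (1-x_{i_\bullet}^{})\mu(x_{C_i}^{})\mu'(x_{D_i}^{})$ into \eqref{psisel} and \eqref{Tseldefinition}, and (b) $\Tsel{h}{}$ commutes with marginalisation onto any $U \ni i_\bullet^{}$, i.e.\ $\Tsel{h}{}(\nu)_U^{} = \Tsel{h}{}(\nu_U^{})$, because both $f$ and $F$ depend only on $x_{i_\bullet}^{}$. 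Combining (a) and (b) yields $\Tsel{h}{}\bigl(\Tsel{h}{m+1}(\nu)_{C_i}^{}\bigr) = \Tsel{h}{m+2}(\nu)_{C_i}^{}$, which closes the induction.

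The main obstacle is conceptual rather than technical: recognising that the nonlinearity of $\Tsel{h}{}$ is harmless because selection and the product structure induced by recombination at site $i$ interact only through the allele at $i_\bullet^{}$, which is untouched by both the marginalisation onto $C_i^{}$ and the independent replacement of the tail. Everything after this observation is bookkeeping with marginals and tensor products.
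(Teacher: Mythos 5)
Your proposal is correct and follows essentially the same route as the paper's proof: induction on $m$ with base case given by the definition \eqref{Ttilde}, the level-set linearity \eqref{linearity} to distribute $\Tsel{h}{}$ over the summands (all of which indeed share the same value of $f$), and the compatibility of selection with the product structure. Your facts (a) and (b) together are precisely the paper's identity \eqref{secondobservation}, which the paper establishes in a single computation via \eqref{niceplay} and \eqref{psiselproduct} rather than splitting it into a product rule and a marginalisation-consistency step.
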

\begin{proof}
The proof rests on our earlier observation~\eqref{linearity} that $\Tsel{h}{}$ is linear on level sets of $f$, together with the fact that for $\mu$ and $\mu' \in \cP(X)$ and all $i \in S^\circ$, we have
\begin{equation}\label{secondobservation}
\Tsel{h}{} (\mu_{C_i}^{} \otimes \mu'_{D_i}) = \Tsel{h}{}(\mu)^{}_{C_i} \otimes \mu'_{D_i},
\end{equation}
which expresses that selection only acts on the head, as mentioned above.
To see this, note that  Eq.~\eqref{alternative} implies for all $x \in X$ that
\begin{equation}
\label{niceplay}
\begin{split}
F \big (\mu^{}_{C_i} \otimes \mu'_{D_i}\big ) (x) &= (1 - x_{i_\bullet^{}}^{})  \mu(x^{}_{C_i},\ast) \mu' (\ast,x^{}_{D_i}) 
= (F \mu)(x^{}_{C_i},\ast) \mu' (\ast,x^{}_{D_i}) \\
& = \big ( (F \mu)^{}_{C_i} \otimes \mu'_{D_i} \big ) (x),
\end{split}
\end{equation}
where we assumed $i_\bullet^{} < i$ without loss of generality. 
Recalling that
$\Psisel (\mu) = s \big ( F(\mu) - f(\mu) \mu \big)$,
 we also see that
\begin{equation}\label{psiselproduct}
\begin{split}
\Psisel(\mu_{C_i}^{} \otimes \mu'_{D_i}) & = s \big ( F(\mu_{C_i}^{} \otimes \mu'_{D_i}) - f(\mu_{C_i}^{} \otimes \mu'_{D_i})\mu_{C_i}^{} \otimes \mu'_{D_i} \big )  \\
&= s \big (F(\mu) - f(\mu)\mu \big)_{C_i}^{} \otimes \mu'_{D_i}
= \Psisel(\mu)_{C_i}^{} \otimes \mu'_{D_i},
\end{split}
\end{equation}
where we have used \eqref{niceplay} together with $f(\mu^{}_{C_i} \otimes \mu'_{D_i})=f(\mu)$ and the bilinearity of $\otimes$ in the second step. This also proves Eq.~\eqref{secondobservation} via $\Tsel{h}{}(\mu_{C_i}^{} \otimes \mu'_{D_i}) = \big (\mu + h \Psisel(\mu) \big )_{C_i}^{} \otimes \mu'_{D_i}$.

We now proceed via induction; for $m=0$, the statement reduces to the definition of $\widetilde T_h^{}$ (see Eq.~\eqref{Ttilde}). Assuming $m \geqslant 1$, we compute
\[
\begin{split}
\Tsel{h}{m}  \big (\widetilde T_h(\nu)  \big ) & = \Tsel{h}{} \big [ \Tsel{h}{m-1} \big ( \widetilde T_h(\nu) \big  ) \big ] \\
& = \Tsel{h}{}  \Big [ \Big (1-h \! \sum_{i \in S^\circ} \varrho^{}_i \Big ) \Tsel{h}{m}(\nu) + h \! \sum_{i \in S^\circ} \varrho^{}_i \Tsel{h}{m}(\nu)^{}_{C_i} \otimes \nu^{}_{D_i} \Big ] \\
& = \Big (1-h \! \sum_{i \in S^\circ} \varrho^{}_i \Big ) \Tsel{h}{}(\Tsel{h}{m}(\nu)) + h \! \sum_{i \in S^\circ} \varrho^{}_i \Tsel{h}{} (\Tsel{h}{m}(\nu)^{}_{C_i} \otimes \nu^{}_{D_i}) \\
& = \Big (1-h \! \sum_{i \in S^\circ} \varrho^{}_i \Big ) \Tsel{h}{m+1}(\nu) + h \! \sum_{i \in S^\circ} \varrho^{}_i  \Tsel{h}{m+1}(\nu) ^{}_{C_i} \otimes \nu^{}_{D_i}.
\end{split}
\]
Here, we applied the induction hypothesis in the second step, while the third step takes advantage of Eq.~\eqref{linearity} and the last step follows from Eq.~\eqref{secondobservation} with $\mu = \Tsel{h}{m}(\nu)$ and $\mu' = \nu$.
\end{proof}

One consequence of the previous lemma and the definition of $\widetilde T_h^{}$ is that any power of $\widetilde T_h^{}(\nu)$ can be expressed as a convex combination of terms of the form
\begin{equation}\label{recombined}
\Tsel{h}{m_1^{}}(\nu)_{A_1}^{} \otimes \ldots \otimes \Tsel{h}{m_r^{}}(\nu)_{A_r}^{},
\end{equation}
where $\{A_1, \ldots, A_r\}$  is an (interval) partition of $S$; this partition, along with $m_1^{},\ldots,m_r^{}$,  is independent of $\nu$.
We want to think about this in genealogical terms, continuing the discussion preceding the lemma. First, we only consider the effect of recombination and let $s=0$, which implies $\Tsel{h}{} = \id$. Then, \eqref{recombined} reduces to the product  $\nu_{A_1}^{} \otimes \ldots \otimes \nu_{A_r}^{}$ of marginals. This is reminiscent of the representation of the solution of the recombination equation via a \emph{partitioning process} $\varSigma = (\varSigma_{mh})_{m \in \NN_0}$ in discrete time, as in \citet{haldane,recoreview}. The partitioning process is a Markov chain on the (interval) partitions of $S$ and should be thought of as running \emph{backward} in time, describing Bob's ancestry in the following way. Each block $A$ of $\varSigma_{mh}$ corresponds to an  independent ancestor, alive at time $t - mh$ (when the type distribution in the population was given by $\nu$), which contributes to Bob's genome the alleles at the sites in $A$; hence the product of the corresponding marginals. Motivated by \ref{item:noreco} and \ref{item:yesreco} above, each block undergoes the following transitions, independently of all others.
\begin{enumerate}[label=(\roman*')]
\item \label{item:ppnoreco}
With probability $1 - h \! \sum_{i \in S^\circ} \varrho_i^{}$, $A$ remains unchanged.
\item \label{item:ppyesreco}
With probability $h \varrho_i^{}$ for all $i \in S^\circ$, the block $A$ is replaced by  two blocks $A \cap C_i$ and $A \cap D_i$ if they are both nonempty; if one of them is empty, the other is $A$ (since $C_i \cup D_i = S$) and the event is silent.
\end{enumerate}
At least heuristically, this argumentation leads to the following stochastic interpretation of the powers of $\widetilde T_h^{}$ for $s = 0$:
\begin{equation}\label{TtildePPrep}
\widetilde T_h^m (\nu) = \EE \Big [ \bigotimes_{A \in \varSigma_{hm}} \nu_A^{} \mid \varSigma_0 = \{S\} \Big].
\end{equation}
For $s > 0$, we describe the ancestral structure by a \emph{labelled}
 partitioning process $\widetilde \varSigma = (\widetilde \varSigma_{mh})_{m \in \NN_0}$ instead; it is related to the unlabelled version
  $\varSigma$ by associating to each block $A$ a label $\theta_A^{}$, which we also call
   the (\emph{selective}) \emph{age} of that block. That is, the amount of time that the sites
    in $A$ hitchhike along with the selected site.  
       Conditional on the evolution of the blocks, their associated labels evolve as follows.
\begin{enumerate}[label=(\alph*')]
\item 
In case of a transition of type \ref{item:ppnoreco}, $\theta_A^{}$ is incremented by $h$.
\item \label{item:ppselyesreco}
In case of a transition of type \ref{item:ppyesreco}, we always set $\theta_{C_i \cap A}^{} \defeq \theta_A^{}$ and $\theta_{D_i \cap A}^{} \defeq 0$, even if the transition is silent on the level of blocks. In particular, whenever $D_i$ is separated as a whole from $i_\bullet^{}$, its label is reset to $0$.
\end{enumerate}
See also \ref{item:selyesreco} and Remark~\ref{rem:physical_link} for the genealogical interpretation of the resetting that occurs in the context of \ref{item:ppselyesreco}.
In analogy with the representation \eqref{TtildePPrep} of $\widetilde T_h^m$ for $s = 0$, we expect for $s > 0$ that
\begin{equation}\label{lpartprocessrep}
\widetilde T_h^m (\nu) = \EE \Big [ \bigotimes_{A \in \Sigma_{hm}} \Tsel{h}{\theta_A^{} / h} (\nu)_A^{} \mid \Sigma_0 = \{S\}, \theta_S^{} = 0 \Big].
\end{equation}
Before making this rigorous, we  provide a visualisation via the (discrete) ancestral initiation graph (AIG).

\begin{definition}\label{hAIG}
The $h$-AIG of length $kh$ is a random graph with labelled leaves, which is constructed recursively and from right to left as follows. For $k=0$, it consists of a single root, which coincides with a single leaf and carries the label $0$. For $k >0$, we construct the AIG of length $kh$ from an AIG of length $(k-1)h$ by attaching to each leaf (with label $mh$)
\begin{itemize}
\item 
an edge
\begin{center}
\includegraphics{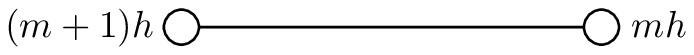}
\end{center}
of length $h$ with a leaf labelled $(m+1)h$, with probability $1 - h \! \sum_{i \in S^\circ} \varrho_i^{}$, 
\item
an $i$-splitting
\begin{center}
\includegraphics{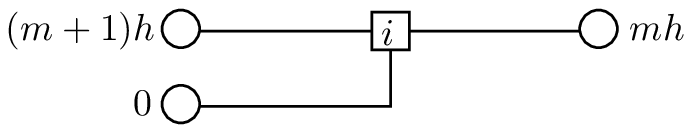}
\end{center}
again of length $h$, with probability $h \varrho_i^{}$  for $i \in S^\circ$.
The upper line carries a leaf with label $(m+1)h$, while the lower line carries a leaf with label 0.  
\end{itemize}
The previous leaves keep their labels but turn into internal vertices.
We refer to the labels of the vertices as their (selective) \emph{ages}.
\end{definition}
The root of the $h$-AIG should be thought of as a randomly chosen member of the current population, and the remaining vertices represent its ancestry at the time points $mh$ before the present. More precisely, we keep track of the sites that each of these vertices is ancestral to. By convention, the root itself is ancestral to $S$. If a vertex is connected to another one via a single edge, the left vertex is ancestral to the same set of sites as the right one. In case of a splitting at site $i$, if the right vertex is ancestral to $A \subseteq S$, then the upper left vertex is ancestral to $A \cap C_i$ while the lower left vertex is ancestral to $A \cap D_i$; if either of these sets is empty, we call the corresponding vertex \emph{nonancestral}.  Pruning  away the subgraph spanned by the non-ancestral vertices results in the (true) ancestry (of the root). Finally, for any nonempty $B \subseteq S$, we call the subgraph spanned by all vertices ancestral to $B' \subseteq S$ with $B' \cap B \neq \varnothing$ the \emph{ancestry} of $B$. In particular, if $B = \{i\}$ is a singleton, we call the resulting sequence of vertices the \emph{ancestral line} of site $i$.

\begin{remark}
One may regard the $h$-AIG as a Markov chain in discrete time. Then, the ancestry (as defined above) can be interpreted as an embedding of the aforementioned labelled partitioning process into the $h$-AIG.
\end{remark}

\begin{figure}[t]
\includegraphics[width = 0.65\textwidth]{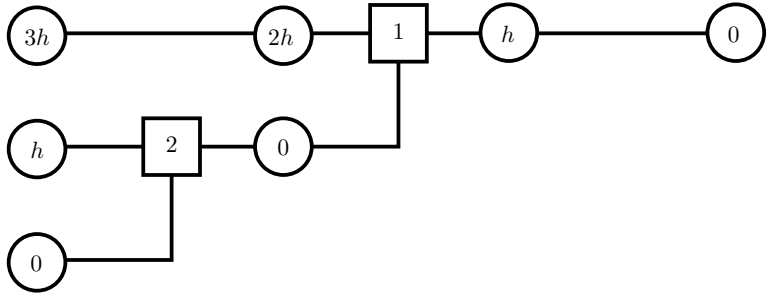}
\caption{\label{haigexample}
A realisation of the $h$-AIG of length $3h$ with $i_\bullet=1$; its root, leaves, and internal vertices are visualised by circles with inscribed labels. The squares correspond to splittings, and the inscribed numbers indicate the sites where the splittings occur.  
}
\end{figure}

Accordingly, given a realisation of the $h$-AIG (see Fig.~\ref{haigexample} for an example) and an initial type distribution $\nu$, we construct the (random) type of the root as follows.
\begin{enumerate}[label=(\arabic*)]
\item[(I)] \label{item:ancsampling}
Assign types to the leaves according to their ages; if a leaf has age $mh$, sample its type according to $\Tsel{h}{m}(\nu)$. 
\item[(II)]
Starting at the leaves, propagate the types  from left to right along the lines of the graph. Whenever two lines are joined in an $i$-splitting, the  type of the descendant is obtained by joining the alleles at the sites in $C_i^{}$ of the upper line with the alleles at the sites in $D_i^{}$ of the lower line.
\item[(III)]
Eventually, this procedure assigns a (random) type to the root; we call it the \emph{type delivered} by the $h$-AIG from the initial distribution $\nu$.
\end{enumerate}
In this sense, the $h$-AIG can be thought of as a random (discrete) flow on $\cP(X)$. In the mean, we recover the discrete approximation of the flow semigroup of the selection-recombination equation generated by $\widetilde T_h^{}$. 

\begin{prop}\label{hAIGdelivery}
Let $\nu \in \cP(X)$ and $k \in \NN$. 
The type delivered by an $h$-\textnormal{AIG} of length $kh$ from the initial distribution $\nu$ is distributed as $\widetilde{T}_h^k(\nu)$.
\end{prop}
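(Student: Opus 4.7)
The plan is to prove the proposition by induction on $k$. For the base case $k=0$, the $h$-AIG consists of a single vertex that is both root and leaf with label $0$, so the delivered type is sampled from $\Tsel{h}{0}(\nu) = \nu = \widetilde{T}_h^0(\nu)$, as claimed.

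For the inductive step, I would exploit the recursive construction from Definition~\ref{hAIG} to decompose an $h$-AIG of length $kh$ into an underlying $h$-AIG of length $(k-1)h$, say $G_{k-1}$, with a one-step extension attached at each of its leaves. Let the leaves of $G_{k-1}$ (which become internal vertices at depth $(k-1)h$ in the full AIG) carry labels $m_1 h, \ldots, m_J h$. At each such leaf, independently, the extension is either an edge producing a new leaf of label $(m_j+1)h$ with probability $1 - h\sum_{i \in S^\circ} \varrho_i$, or an $i$-splitting producing new leaves of labels $(m_j+1)h$ and $0$ with probability $h\varrho_i$ for $i \in S^\circ$. Sampling fresh types at the new leaves and propagating backward through the extension, the type induced at leaf $j$ of $G_{k-1}$ is distributed as $\Tsel{h}{m_j+1}(\nu)$ in the edge case, and as $\Tsel{h}{m_j+1}(\nu)_{C_i} \otimes \nu_{D_i}$ in the $i$-splitting case (the latter because the two fresh samples are independent, so joining alleles on $C_i$ from the upper and on $D_i$ from the lower factorises into this product). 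Averaging over the random extension and invoking Lemma~\ref{restrictionislinear} with $m = m_j$, this marginal distribution collapses to $\Tsel{h}{m_j}(\widetilde{T}_h(\nu))$.

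Because the extensions at different leaves of $G_{k-1}$ are mutually independent and driven by independent fresh leaf samples, the induced types at the leaves of $G_{k-1}$ are moreover mutually independent conditional on $G_{k-1}$. This matches exactly the joint law that the delivery procedure would produce at the leaves of $G_{k-1}$ from the initial distribution $\widetilde{T}_h(\nu)$ in place of $\nu$. Propagating further through $G_{k-1}$ to the root, the root's type is therefore distributed as the type delivered by $G_{k-1}$ from $\widetilde{T}_h(\nu)$, which by the induction hypothesis equals $\widetilde{T}_h^{k-1}(\widetilde{T}_h(\nu)) = \widetilde{T}_h^k(\nu)$, closing the induction.

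The main subtlety I anticipate is keeping the joint law of the induced types at the intermediate layer straight, so that one can genuinely ``reinitialise'' the delivery procedure on the smaller AIG $G_{k-1}$ with a fresh initial distribution. All the algebraic content is carried by Lemma~\ref{restrictionislinear}; the remaining work lies in verifying that the relevant independences (across extensions at different leaves, and between the extensions and the structure of $G_{k-1}$) align so that the peel-off of one Euler step is clean.
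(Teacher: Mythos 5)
Your proposal is correct and follows essentially the same route as the paper: induction on $k$, peeling off the one-step extension at the leaves, and using Lemma~\ref{restrictionislinear} to show that this amounts to replacing the initial distribution $\nu$ by $\widetilde{T}_h^{}(\nu)$ before invoking the induction hypothesis. Your explicit attention to the conditional independence of the induced leaf types is a welcome elaboration of a point the paper's proof leaves implicit, but it is not a different argument.
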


\begin{proof}
For $k = 0$, there is nothing to show. Assume that the statement holds for $k \geqslant 0$. 

Recall that when passing from the $h$-AIG of length $kh$ to the $h$-AIG of length $(k+1)h$, we attach independently to each leaf an edge with probability 
$1 - h \! \sum_{i \in S^\circ}\varrho_i^{}$. If that leaf has age $mh$, then, by construction, the new leaf has age $(m+1)h$, and its type has distribution $\Tsel{h}{m+1}(\nu)$. With probability $h\varrho_i^{}$, an $i$-splitting is attached. Recall that the upper leaf (which contributes sites $1$ to $i$) has age $(m+1)h$, while the lower leaf (which contributes sites $i + 1$ to $n$) has age $0$. Therefore, the resulting type has distribution $\Tsel{h}{m+1}(\nu)_{C_i}^{} \otimes \nu_{D_i}^{}$.

To summarise, passing from the $h$-AIG of length $kh$ to the $h$-AIG of length $(k+1)h$ has the same effect as replacing the distribution $\Tsel{h}{m}(\nu)$ of a leaf of age $mh$ by 
\begin{equation*}
\Big (1 - h \! \sum_{i \in S^\circ}\varrho_i^{} \Big) \Tsel{h}{m+1}(\nu) + h \! \sum_{i \in S^\circ} \varrho_i^{} \, \Tsel{h}{m+1}(\nu)_{C_i}^{} \otimes \nu_{D_i}^{} = \Tsel{h}{m}(\widetilde T_h^{}(\nu)),
\end{equation*}
where the equality holds by Lemma~\ref{restrictionislinear}. The net effect of passing from length $kh$ to length $(k+1)h$ in the $h$-AIG thus amounts to replacing the initial type distribution $\nu$ by $\widetilde T_h^{}(\nu)$.  By the induction hypothesis, this concludes the proof.
\end{proof}

Let us take a moment to appreciate what we have accomplished so far. We have constructed a discrete approximation of the flow associated with Eq.~\eqref{recoselequation} that can be realised via a graphical construction describing the genealogical structure of a sample due to recombination. We next let the step size $h \to 0$, which will yield a graphical representation of the \emph{exact} solution $\omega$.

To this end, we recall Eq.~\eqref{eulerscheme}, consider  the $h$-AIG of length $\lfloor t / h \rfloor$, and let $h \to 0$. First, note that the vertices, which represent the (potential) ancestors of the root, move closer and closer together as $h \to 0$ so that, in the limit, (potentially) ancestral lines turn from  sequences of vertices into continuous lines.  In between two vertices, an $i$-splitting occurs with probability $h\varrho_i^{}$, meaning that the distance between any vertex and the next $i$-splitting (in either direction of time) is distributed as $h\cG_h$ where $\cG_h$ is geometrically distributed with success probability $h \varrho_i^{}$. More precisely, $\cG_h$ counts the number of failures up to (and not including) the first success, where a success corresponds to a splitting between two successive vertices.  Thus, in the limit, the distance between any point and the next $i$-splitting will be exponentially distributed with parameter $\varrho_i^{}$. Therefore, we define the (continuous) ancestral initiation graph (AIG) as follows; see Figure~\ref{aigpartition} for an illustration.

\begin{definition}\label{aig}
The AIG (of length $t$) is a random graph of length $t$ that is grown from right to left, starting with a single line emanating from its root. For $i \in S^\circ$, each line is affected by $i$-splittings at an exponential rate $\varrho_i^{}$. Each leftmost point is called a \emph{leaf} and is labelled by the length (or age) $\theta$ of the line it is attached to, measured from the point where that line has split off; if it has not split off from any line (as is the case for the top line), it has length $t$. We will abbreviate the AIG of length $t$ by $\Gamma_t^{}$.
\hfill $\diamondsuit$
\end{definition}

\begin{figure}
\includegraphics[width = 0.8 \textwidth]{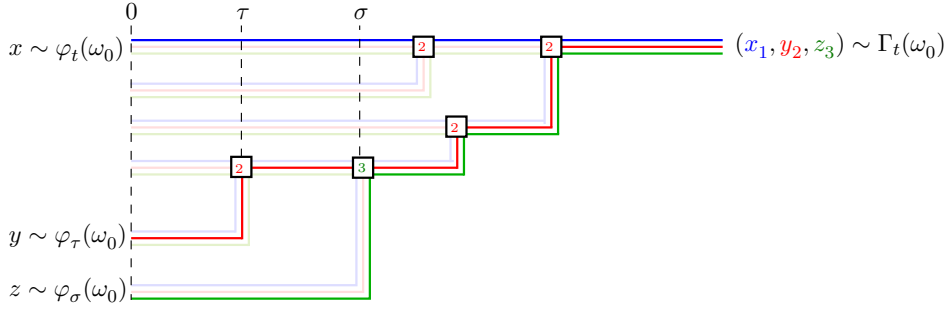}
\caption{A realisation of the AIG with $i_\bullet=1$, along with the partitioning of the root sequence into ancestral lines. Sites 1,2, and 3 are blue, red, and green; ancestral (nonancestral) material is in opaque (pale) colour. }
\label{aigpartition}
\end{figure}

In the following, we denote the AIG of length $t$ by $\Gamma_t$. As in the discrete case, it naturally embeds into a graph-valued Markov process $\Gamma = (\Gamma_t)_{t \geqslant 0}$, which we simply call the AIG (without specifying its length); it can be interpreted as a random flow on $\cP(X)$ via the following sampling procedure, which is derived from the corresponding procedure in the discrete case by replacing (I) above by
\begin{enumerate}[label=(\arabic*')]
\item[(I')]
Assign types to the leaves according to their ages; if a leaf has age $\theta$, sample its type according to $\varphi_\theta^{}(\nu)$ (recall from Theorem~\ref{recursion} that $\varphi$ denotes the flow of the pure selection equation).
\end{enumerate}
In line with this interpretation, we denote by $\Gamma_t(\nu)$ the distribution of the type delivered by the AIG of length $t$ from the initial distribution $\nu$. With this, the findings of this section can be succinctly summarised by the equality
\begin{equation}\label{aigdeliveryeq}
\omega_t^{} = \EE [ \Gamma_t(\omega_0^{})],
\end{equation}
where the expectation is taken with respect to all realisations of $\Gamma$.

\begin{remark}\label{connectiontoessASRG}
Before we embark on the proof of Theorem~\ref{recursion}, let us take a moment to relate the AIG to the constructions introduced in~\citet{Selrek}.
\begin{enumerate}[label = (\Alph*)]
\item \label{item:timetolinecount}
So far, we did not consider the ancestral structure due to selection, which we made up for by sampling the type of each leaf of the AIG from $\varphi_\theta^{} (\omega^{}_0)$, according to its (selective) age $\theta$, rather than from $\omega_0^{}$. Alternatively, we may take advantage of the well-known duality of the pure selection equation and the \emph{ancestral selection graph} (ASG). For each leaf, we keep track of the random number of potential ancestors within an associated independent copy of the ASG. Independently for each leaf and in the absence of splitting, this number grows according to a Yule process with (binary) branching rate $s$, starting with a single potential ancestor associated to the root. Upon a splitting, the leaf of the top line inherits the number of ancestors, while the leaf of the lower line starts from scratch with a single ancestor. 

In the sampling step, a type is assigned to a leaf which carries, say, $r$ potential ancestors, via the following two-step procedure. First, a type is sampled according to $\omega_0^{}$ for each potential ancestor. These then compete for the true ancestry, where the fit individuals prevail over the unfit ones. That is, if there is at least one fit individual, the true ancestral type is chosen uniformly among the fit ones, and uniformly among all samples, otherwise. To summarise: if a leaf carries $r$ potential ancestors, its type is distributed according to
\begin{equation*}
(1 - f(\omega^{}_0))^r d(\omega^{}_0) + \big ( 1 - (1-f(\omega^{}_0))^r \big) b(\omega^{}_0)
\end{equation*}
Finally, the types are propagated through the graph as before. 
\item
The blocks of the partitioning process associated with the true ancestry of an individual under recombination, illustrated by the opaque lines in Fig.~\ref{aigpartition}, can be decorated with the number of potential ancestors under selection in the sense of \ref{item:timetolinecount}. This gives the \emph{weighted} partitioning process in the sense of \citet[Section 7.1]{Selrek}.
\item
Rather than just keeping track of the number of potential ancestors for each leaf, we can instead keep track of the full graphical representation of the ASG at any given time; this leads to the \emph{essential} ASRG discussed in \citet[Section 6]{Selrek}.
\item
Similarly to how the $h$-AIG can be seen as a graphical encoding of the discrete labelled partitioning process, the AIG can be seen as a graphical encoding of its analogue in continuous time, as illustrated by the opaque lines in Fig.~\ref{aigpartition}. By the single-crossover assumption, one can succinctly encode this partitioning process  as a vector-valued process $(\widetilde \Theta_t)_{t \geqslant 0} = (\tilde \theta_1^{},\ldots,\tilde \theta_n^{})_{t \geqslant 0}$, where $\tilde \theta_{i_\bullet^{},t}^{}=t$ is the time selection has acted on the selected site; whereas $\tilde \theta_{i,t}^{}$, $i \in S^\circ$, takes values in $\RR_{\geqslant 0} \cup \{\Delta\})$. More precisely, for $i \in S^\circ$, $\tilde \theta_{i,0}^{}=\Delta$; for $t>0$, $\tilde \theta_{i,t}^{}\in \RR_{\geqslant 0}$ is the time since the last splitting event on the ancestral line of site $i$ that separated it from $i_\bullet^{}$, whereas $\tilde \theta_{i,t}^{}=\Delta$ indicates that no such event has occurred yet. 
The time evolution of $(\widetilde \Theta_t)_{t \geqslant 0}$ is given by an independent collection of \emph{initiation processes}, which are named so because they describe the `initiation' of a new `selection epoch' by every splitting event; see \citet[Section~7.3]{Selrek}.
\end{enumerate}
\hfill $\diamondsuit$ 
\end{remark}

\subsection*{Proof of Theorem~\ref{recursion}} In line with the assumption of the theorem, we restrict ourselves to the case $i_\bullet^{} = 1$. We will now use the stochastic representation~\eqref{aigdeliveryeq} of $\omega$ via the AIG to prove the theorem. In perfect analogy with the recursion in Theorem~\ref{recursion} and for all $0 \leqslant k < n$, we define the \emph{truncated} AIGs, $\Gamma^{(k)}$, by ignoring all $i$-splittings in $\Gamma$ for $i > k+1$.  In particular, $\Gamma^{(0)}$ consists only  of a single edge, and $\Gamma^{(n-1)} = \Gamma$ is the full AIG.  As for the full AIG, we denote the type delivered by $\Gamma_t^{(k)}$ from the initial distribution $\nu$ by $\Gamma_t^{(k)}(\nu)$. In analogy to Eq.~\eqref{aigdeliveryeq}, we have
\begin{equation}\label{stochasticrep}
\omega_t^{(k)} = \EE \big [\Gamma_t^{(k)} (\omega_0^{}) \big].
\end{equation}
The key insight is that $\Gamma^{(k-1)}$ naturally embeds into $\Gamma^{(k)}$ by ignoring all $k + 1$-splittings in $\Gamma^{(k)}$; see Fig.~\ref{aigexample}. 
\begin{figure}[t]
\includegraphics[width = 0.8\textwidth]{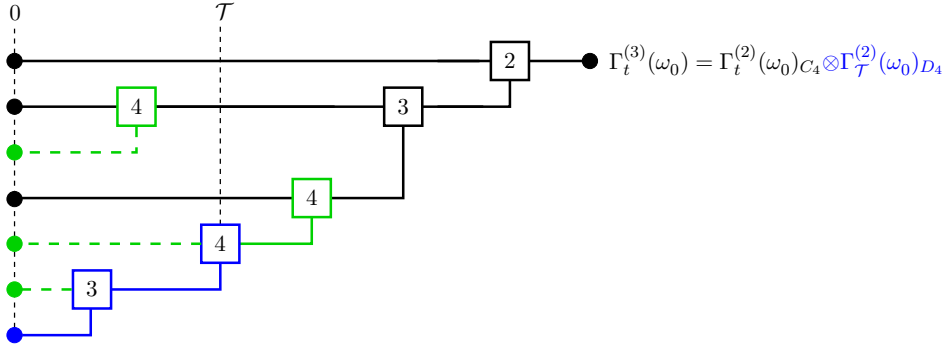}
\caption{\label{aigexample}
A realisation of $\Gamma_t^{} = \Gamma_t^{(3)}$ (with $n=4$ and $i_\bullet=1$). At its root, the distribution $\Gamma_t(\omega^{}_0)$ of the delivered type; here, it decomposes into the marginals with respect to $C_4$ and $D_4$. Ignoring the green and blue elements yields a realisation of $\Gamma_t^{(2)}$. Lines that do not carry ancestral material are dashed. The part of the ancestral line of $D_4$ on which no further $4$-splittings occur is  drawn in blue (the last such splitting  happens at (forward) time $\cT$).}
\end{figure}
Moreover, we can decompose $\Gamma^{(k)}$ into the ancestries of $C_{k+1}$ and $D_{k+1}$ and make the following two observations. First, the ancestry of $C_{k+1}$ is not affected by $k+1$-splittings. This is because  the newly attached lines in a $k+1$-splitting belong to  $D_{k+1}$ and are nonancestral to $C_{k+1}$. In short, this implies that
\begin{equation*}
\Gamma_t^{(k)}(\omega_0^{})_{C_{k+1}} = \Gamma_t^{(k-1)}(\omega_0^{})_{C_{k+1}}
\end{equation*}
always, in line with the marginalisation consistency discussed in~\citet[Appendix A]{Selrek}.

Next, we consider the ancestry of $D_{k+1}$, which consists only of the ancestral line of site $k+1$, due to the absence of $i$-splittings for $i > k+1$. There are two possibilities. With probability $\ee^{-\varrho_{k+1}^{} t}$, no $k+1$-splitting occurs on this line so that it travels together with the ancestral line of site $k$. In this case, we have
\begin{equation*}
\Gamma_t^{(k)}(\omega_0^{}) = \Gamma_t^{(k-1)}(\omega_0^{}).
\end{equation*}
On the other hand, if a $k+1$-splitting \emph{did} occur on the ancestral line of site $k+1$, then the alleles at $C_{k+1}$ and $D_{k+1}$ are provided by different, independent ancestors, whence
\begin{equation*}
\Gamma_t^{(k)}(\omega_0^{}) = \Gamma_t^{(k)}(\omega_0^{})_{C_{k+1}} \otimes \Gamma_t^{(k)}(\omega_0^{})_{D_{k+1}}
= \Gamma_t^{(k-1)}(\omega_0^{})_{C_{k+1}} \otimes \Gamma_t^{(k)}(\omega_0^{})_{D_{k+1}}.
\end{equation*}
Moreover, in the absence of $k+1$-splittings on the ancestral line of site $k+1$, the age of this line evolves as though it was part of an independent copy $\widetilde{\Gamma}$ of $\Gamma$. Thus, denoting by $\cT$ the distance from the leaf to the \emph{leftmost} $k+1$-splitting, which is exponentially distributed with mean $1 / \varrho_{k+1}^{}$ conditional on being $\leqslant t$,  we have
$\Gamma_t^{(k)}(\omega_0^{})_{D_{k+1}} = \widetilde \Gamma_\cT^{(k-1)}(\omega_0^{})_{D_{k+1}}$ in distribution.

To summarise, with $\cE \sim \text{Exp}(\varrho_{k+1}^{})$ (where $\exponential(\lambda)$ denotes the exponential distribution with parameter $\lambda$), we obtain 
\begin{equation*}
\Gamma_t^{(k)}(\omega_0^{}) = \one_{\cE \geqslant t}^{} \Gamma_t^{(k-1)} (\omega_0^{}) + \one_{\cE < t}^{} \Gamma_t^{(k-1)}(\omega_0^{})_{C_{k+1}} \otimes \widetilde \Gamma_\cE^{(k-1)}(\omega_0^{})_{D_{k+1}}
\end{equation*}
in distribution,
where  $\one$ denotes the indicator function. Taking the expectation on both sides  yields
\begin{equation*}
\begin{split}
\omega_t^{(k)} &= \EE \big [\Gamma_t^{(k)} (\omega^{}_0) \big ] = 
\EE \big [\one_{\cE \geqslant t}^{} \Gamma_t^{(k-1)} (\omega_0^{}) + \one_{\cE < t}^{} \Gamma_t^{(k-1)}(\omega_0^{})_{C_{k+1}} \otimes \widetilde \Gamma_\cE^{(k-1)}(\omega_0^{})_{D_{k+1}}  \big] \\
&= \ee^{-\varrho_{k+1}^{} t} \EE \big [\Gamma_t^{(k-1)} (\omega^{}_0) \big ] + \EE \big [\Gamma_t^{(k-1)}(\omega_0^{}) \big ]_{C_{k+1}}^{} \otimes \int_0^t \varrho^{}_{k+1} \ee^{-\varrho_{k+1}^{} \tau} \EE \big [ \widetilde \Gamma_\tau^{(k-1)} (\omega_0^{}) \big ]^{}_{D_{k+1}} \dd \tau \\
&= \ee^{-\varrho_{k+1}^{} t} \omega_t^{(k-1)} + \omega_{C_{k+1},t}^{(k-1)} \otimes \int_0^t \varrho_{k+1}^{} \ee^{-\varrho_{k+1}^{} \tau} \omega_{D_{k+1},\tau}^{(k-1)} \dd \tau,
\end{split}
\end{equation*}
which proves Theorem~\ref{recursion}. \hfill \qed

We now present  a more algebraic argument that works directly at the level of the discrete flow. It can be viewed as a condensed version of the genealogical proof given above. Because some computations are similar to others that have already been carried out in detail elsewhere in the paper, we can give a streamlined argument here.
\subsection{An algebraic proof}
\label{subsec:purelyanalytic}
To reflect the recursion in Theorem~\ref{recursion}, we introduce $\widetilde T_h^{(k)}$, which is defined recursively as follows. We set $\widetilde T_h^{(0)} \defeq \Tsel{h}{}$, and, for $1 \leqslant k \leqslant n-1$,
\begin{equation}\label{deftildeThk}
\widetilde T_h^{(k)} (\nu) \defeq (1 - h \varrho_{k+1}^{}) \widetilde T_h^{(k-1)}(\nu) + h \varrho_{k+1}^{} \widetilde T_h^{(k-1)}(\nu)_{C_{k+1}} \otimes \nu_{D_{k+1}}^{}.
\end{equation}
Up to order $h^2$, $\widetilde T_h^{(k)}$ is just $\widetilde T_h^{}$ with $\varrho_{k+2}^{},\ldots,\varrho_n^{}$ set to $0$. Therefore, arguing exactly as for Eq.~\eqref{eulerscheme}, we have
\begin{equation}\label{eulerscheme2}
\lim_{h \to 0} \big (\widetilde T_h^{(k)} \big)^{\lfloor \frac{t}{h} \rfloor} (\omega_0^{}) = \omega_t^{(k)}.
\end{equation}

Recall that Eq.~\eqref{linearity} was instrumental in making sense of the powers of $\widetilde T_h^{}$. In analogy, we have 
\begin{equation}\label{linearity'}
\widetilde T_h^{(k)} (\alpha \mu + (1 - \alpha) \mu') = \alpha \widetilde T_h^{(k)} (\mu) + (1- \alpha) \widetilde T_h^{(k)}(\mu')
\end{equation}
for all $\mu,\mu' \in \cP(X)$ with $\mu_{C_{k+1}}^{} = \mu'_{C_{k+1}}$. This follows from Eq.~\eqref{linearity} together with the corresponding statement for the recombinators $\cR_i$ with $i \leqslant k+1$, which is obvious. In the same way, we see that
\begin{equation}\label{secondobservation'}
\widetilde T_h^{(k)} (\mu_{C_i}^{} \otimes \mu'_{D_i}) = \widetilde T_h^{(k)} (\mu)_{C_i} \otimes \mu'_{D_i}
\end{equation}
for $i \in S^\circ$, in analogy with \eqref{secondobservation}.
Proceeding exactly as in the proof of Lemma~\ref{restrictionislinear} with $\widetilde T_h^{(k-1)}$ in the place of $\Tsel{h}{}$, $\widetilde T_h^{(k)}$ in the place of $\widetilde T_h^{}$, as well as \eqref{linearity'} and \eqref{secondobservation'} replacing \eqref{linearity} and \eqref{secondobservation}, respectively,
 we obtain the following analogue of Lemma~\ref{restrictionislinear}: for all $m \in \NN$ and $\nu \in \cP(X)$,
\begin{equation}\label{singleline}
(\widetilde T_h^{(k-1)})^m \big ( \widetilde T_h^{(k)}(\nu) \big ) = (1 - h\varrho_{k+1}^{}) (\widetilde T_h^{(k-1)})^{m+1}(\nu) + h\varrho_{k+1}^{}(\widetilde T_h^{(k-1)})^{m+1}(\nu)_{C_{k+1}} \otimes \nu_{D_{k+1}}^{}.
\end{equation}

To evaluate the powers of $\widetilde T_h^{(k)}$, 
we now introduce the shorthands 
\begin{equation*}
A(\nu) \defeq \widetilde T_h^{(k-1)} (\nu) \quad \text{and} \quad \, B^{i,j}(\nu) \defeq \big (\widetilde T_h^{(k-1)}  \big )^i (\nu)_{C_{k+1}} \otimes \big (\widetilde T_h^{(k-1)}  \big )^j (\nu)_{D_{k+1}} \text{ for } i,j \in \NN_0.
\end{equation*}
Then, we can rewrite the definition of $\widetilde T_h^{(k)}$ as
\begin{equation}\label{stochastic_representation}
\widetilde T_h^{(k)} (\nu) =  \EE[\one_{\{\cB=0\}} A(\nu) + \one_{\{\cB=1\}} B^{1,0}(\nu)],
\end{equation}
where $\cB$ is a Bernoulli random variable with success probability $h \varrho_{k+1}^{}$. Furthermore, it is immediate from Eq.~\eqref{singleline} that 
\begin{equation}\label{action_Ai}
\begin{split}
A^i \big (\widetilde T_h^{(k)} (\nu) \big ) & = (1 - h \varrho_{k+1}^{}) A^{i+1}(\nu) + h \varrho_{k+1}^{} B^{i+1,0}(\nu)  \\ &= \EE[\one_{\{\cB=0\}} A^{i+1}(\nu) + \one_{\{\cB=1\}}B^{i+1,0}(\nu)]
\end{split}
\end{equation}
and
\begin{equation}\label{action_Bij}
\begin{split}
B^{i,j}\big (\widetilde T_h^{(k)} (\nu)\big ) & = (1 - h \varrho_{k+1}^{}) B^{i+1,j+1}(\nu) + h \varrho_{k+1}^{} B^{i+1,0}(\nu) \\ &= \EE[\one_{\{\cB=0\}} B^{i+1,j+1}(\nu) + \one_{\{\cB=1\}} B^{i+1,0}(\nu)],
\end{split}
\end{equation}
with $\cB$ as above. Starting from the stochastic representation \eqref{stochastic_representation} of $\widetilde T_h^{(k)}$ and applying Eqs.~\eqref{action_Ai} and \eqref{action_Bij} in an inductive manner, we see that 
\begin{equation}\label{powerThk}
  \big (\widetilde T_h^{(k)} \big)^m (\nu) = \EE[\cC_m] (\nu),
\end{equation}
where $(\cC_m)_{m \in \NN}$ is a Markov chain on the set of operators of the form $A^i$ or $B^{i,j}$ where $i \geqslant 1$ and $j \geqslant 0$. Its initial distribution is given by
\begin{equation*}
\PP (\cC_1 = A) = 1 - h \varrho_{k+1}^{} = 1 - \PP(\cC_1 = B^{1,0})
\end{equation*}
and the transition probabilities are
\begin{equation*}
\PP (\cC_{m+1} = A^{i + 1} \mid \cC_{m} = A^i ) = 1 - h \varrho_{k+1}^{} = 1 - \PP(\cC_{m+1} = B^{i+1,0} \mid \cC_m = A^i )
\end{equation*}
and
\begin{equation*}
\PP (\cC_{m+1} = B^{i+1,j+1} \mid \cC_m = B^{i,j}) = 1 - h \varrho_{k+1}^{} = 1 - \PP (\cC_{m+1} = B^{i+1,0} \mid \cC_m = B^{i,j} ).
\end{equation*}
Given an infinite sequence $\cB_1,\cB_2,\ldots$ of independent Bernoulli random variables with success probability $h\varrho_{k+1}$, we can construct a realisation of this Markov chain as follows. Set $\cC_1 \defeq \one_{\{\cB_1=0\}} A + \one_{\{\cB_1=1\}}B^{1,0}$. For $i > 1$, if $\cC_{i-1} = A^{i-1}$, set $\cC_{i} = \one_{\{\cB_{i} = 0\}}^{} A^{i} + \one_{\{\cB_{i} = 1\}}^{} B^{i,0}$. If $\cC_{i-1} = B^{i-1,j}$, set $\cC_{i} \defeq \one_{\{\cB_{i} = 0\}}^{} B^{i ,j + 1} +  \one_{\{\cB_{i} = 1\}}^{} B^{i ,0}$.

It is then clear that $\cC_m=A^m$ if  $\cB_1= \ldots = \cB_m=0$; otherwise $\cC_m=B^{m,\ell}$, where $\ell$ is such that $\cB_{m-\ell-1}=1, \cB_{m-\ell}=\cB_{m-\ell+1}= \ldots= \cB_m=0$, that is, $\ell$ is the number of 0's after the last 1. Since the number of failures before the first success, in both directions of time, is geometrically distributed with parameter $h \varrho_{k+1}^{}$, we finally see that 
\begin{equation*}
\big (\widetilde T_h^{(k)} \big)^m (\nu) = \EE \big [ \one_{\{\cG \geqslant m\}}^{} \big (\widetilde T_h^{(k-1)} \big)^m (\nu)  + \one_{\{\cG < m\}}^{}
 \big (\widetilde T_h^{(k-1)} \big)^m (\nu)_{C_{k+1}} \otimes  \big (\widetilde T_h^{(k-1)} \big)^{\cG} (\nu)_{D_{k+1}} \big ],
\end{equation*}
where $\cG$ is a geometric random variable with success probability $h \varrho_{k+1}^{}$. Setting $\nu = \omega_0^{}$, $m \defeq \lfloor t / h \rfloor$, letting $h \to 0$ and noting that $h \cG \to \cE \sim \text{Exp}(\varrho_{k+1}^{})$ in distribution, we see that
\begin{equation*}
\omega_t^{(k)} = \EE [\one_{\{\cE > t\}}^{} \omega_t^{(k-1)} + \one_{\{\cE \leqslant t\}}^{} \omega_{C_{k+1},t}^{(k-1)} \otimes  \omega_{D_{k+1},\cE}^{(k-1)} ],
\end{equation*}
which is the recursion stated in the theorem.
\qed




\section{Application: linkage disequilibrium in selective sweeps}
\label{sec:LDs}

We close by showing how our results can explain the effect of a selective sweep on the correlation or \emph{linkage disequilibrium} (LD) between two neutral sites.  A \emph{selective sweep} \citep{MaynardSmithHaigh} occurs when a new beneficial mutation at  $i_\bullet$ becomes prevalent in the population and thus also increases the frequency of the alleles at the neutral sites that were associated with the beneficial mutation when it arose; these alleles thus \emph{hitchhike} along with the beneficial mutation. Here, we consider the simplest scenario of two neutral sites $L$ and $R$ that are linked to $i_\bullet$, and drop the assumption that $i_\bullet = 1$. 
Following \citet{stephansonglangley}, we therefore take $S = \{i_\bullet,L,R\} = \{1,2,3\}$, where $i_\bullet \in \{1,2,3\}$ is given and $L,R \in S \setminus i_\bullet$ satisfy $L<R$; $L$ and $R$ denote the `left' and the `right' neutral site, respectively, see Fig.~\ref{SLR}. We then consider the LD or \emph{correlation function} between sites $L$ and $R$,

\begin{figure}[t]


\begin{center}
\includegraphics[width = 0.6\textwidth]{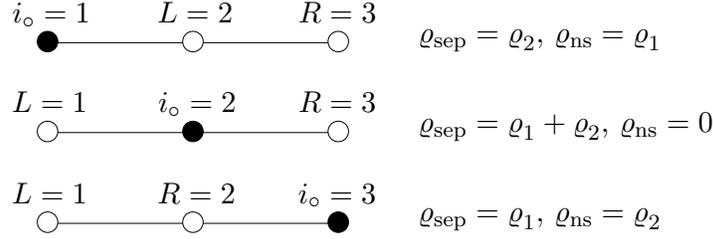}
\end{center}
\caption{\label{SLR}
The three cases $i_\bullet = 1, \, i_\bullet = 2$, and $i_\bullet = 3$. The selected site is represented by a bullet, the other two (neutral) sites by circles. For each arrangement, the values of $\rhosep$ and $\rhonosep$ are given in terms of $\varrho_L^{}$ and $\varrho_R^{}$. Note that $\rhonosep=0$ if $i_\bullet$ is in the middle. Therefore, the behaviour is fundamentally different in this case compared to when $i_\bullet$ is one of the outer  sites. See also Fig.~\ref{LDplot} for a comparison of the time-evolution of the LD in the case $\rhonosep = 0$ versus $\rhosep \neq 0$.}
\end{figure}

\begin{equation} \label{lddefinition}
\textnormal{Cor}(\omega_t) \defeq  \omega^{}_{\{L,R\},t}((1,1))  -  \omega^{}_{\{L\},t}(1) \omega^{}_{\{R\},t}(1).
\end{equation}
Our goal is to examine how the dynamics of LD is affected by the location of  $i_\bullet^{}$ relative to the neutral sites. In their Fig.~2, \citet{stephansonglangley} illustrate this dynamics by a numerical evaluation of their approximate solution and observe a  somewhat complicated behaviour, which remains a little mysterious. Some  hints at an explanation are given by \citet{PfaffelhuberLehnertStephan}, who, however, work in a  different setting: they consider a finite population in a  strong-selection approximation,  and focus on the LD at a fixed time close to the time of fixation. In what follows, we give a thorough discussion of the full dynamics in the law  of large numbers regime, both forward in time and in the genealogical sense.

As in the work cited above, we are interested in a single, rare beneficial mutation that is introduced into a homogeneous background. We model this by picking a single type \mbox{$x^{\text{mut}} \in \{x \in X : x^{}_{i_\bullet}=0\}$}  and set $\omega_0^{}(x^{\text{mut}}) \defeq \varepsilon$ (where $\varepsilon$ is a small positive number), together with $\omega^{}_0(x) \defeq 0$ for all $x \neq x^{\text{mut}}$ with $x^{}_{i_\bullet}=0$. More specifically, we choose $x^{\text{mut}}_L = x^{\text{mut}}_R = 1$, in line with \citet{stephansonglangley}, and adjust the remaining type frequencies such that 
\begin{itemize}
\item 
$\textnormal{Cor}(\omega_0) > 0$, and
\item
for $\varrho^{}_L=\varrho^{}_R=0$, one has $\frac{\dd}{\dd t} \textnormal{Cor}(\omega_t) |_{t=0} > 0$ (so that $(x^{}_L,x^{}_R)=(1,1)$  hitchhikes along with $x^{}_{i_\bullet}=0$).
\end{itemize}
The exact parameter values are given in the caption of Fig.~\ref{LDplot}.  

\begin{figure}[t]
\includegraphics[width=0.8\textwidth]{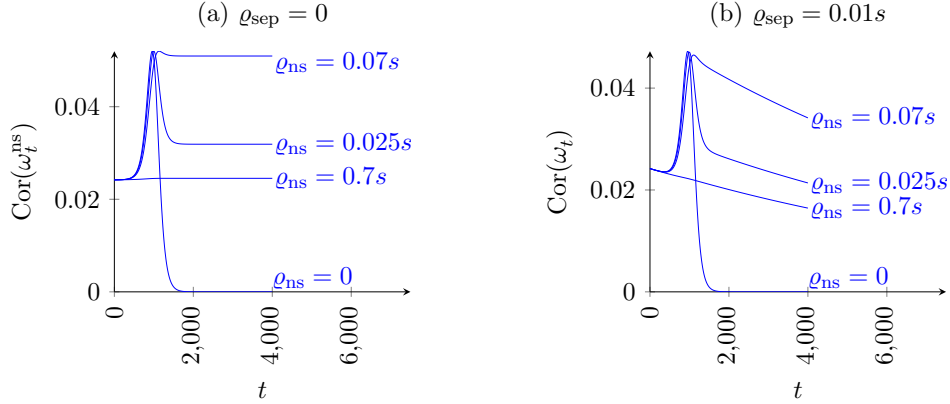}
\caption{\label{LDplot}
Time evolution of the correlation under selection and recombination as obtained by evaluating the solution formula from Theorem \ref{recursion}. In the left panel, recombination only separates the block $\{L,R\}$ from the selected site, but not $L$ and $R$ from each other. In the right panel, separating recombination is added. Parameters: $s = 10^{-2}$, $i_\bullet=1$,
$\omega_0(( 0,1,1) ) =  5 \cdot 10^{-5} = \varepsilon, \omega_0(( 1,1,1)) = 0.38995, \omega_0((1,0,1)) = 0.23, \omega_0((1,1,0)) = 0.2$ and $\omega_0((1,0,0)) =0.18$. 
}
\end{figure}

It is clear that, for $\varrho^{}_L=\varrho^{}_R=0$, there is an initial increase of LD due to hitchhiking, followed by an eventual decay to zero; compare the curve for $\varrho_\textnormal{ns} =0$ in Fig.~\ref{LDplot} (a). The decay is due to the fact that, under the pure selection equation,  the single fit mutant type ultimately goes to fixation, that is,
\begin{equation}\label{omega0}
\omega^{(0)}_\infty \defeq \lim_{t \to \infty} \omega^{(0)}_t = \delta_{x^{\text{mut}}},
\end{equation}
and the correlation vanishes for any point measure.
Let us now investigate how this behaviour changes in the presence of recombination. Motivated by the observation of \citet{stephansonglangley} and \citet{PfaffelhuberLehnertStephan} that the behaviour is crucially different depending on whether the selected site is to one side of or between $L$ and $R$, we distinguish between recombination events that separate $L$ and $R$ and those that do not. We therefore define $\rhosep$ as the total recombination rate between the sites $L$ and $R$, and $\rhonosep$ as  the remaining recombination rate, that is, the  rate at which $\{L,R\}$, as an intact entity, is separated from $i_\bullet$, see Fig.~\ref{SLR}:
\begin{equation}
\begin{split}\label{rhosepnosep}
\rhosep & \defeq \! \! \! \! \sum_{\substack{i \in S^\circ:  \\|S^\circ \cap C_i| = |S^\circ \cap D_i| = 1  }} \!\!\!\!\!\!\!\!\! \!\!\!\!\! \varrho_i^{} \quad \text{and}\\
\rhonosep \!&  \defeq \Big ( \sum_{i \in S^\circ}\varrho_i^{} \Big ) - \rhosep.
\end{split}
\end{equation}
We will then see that the dynamics of the LD \eqref{lddefinition} may be reparametrised in terms of $\rhosep$ and $\rhonosep$, thus eliminating the need to distinguish the different locations of $i_\bullet$. More precisely, the position of $i_\bullet$ affects the dynamics of the correlation only via $\rhosep$ and 
$\rhonosep$; in particular, $i_\bullet=2$ implies $\rhonosep=0$.

Let us first consider $\omega^{\nosep}_{}$, the solution (with initial condition $\omega_0^{}$) of Eq.~\eqref{recoselequation} with all $\varrho^{}_i$ for which $|S^\circ \cap C_i| = |S^\circ \cap D_i| = 1$ (and therefore $\rhosep$) set to 0 (note that, for $i_\bullet=2$, $\rhonosep = 0$ and therefore  $\omega^{\nosep}_{}=\omega^{(0)}$). We then have
\begin{lemma}\label{omegansasymptotics}
 For $\varrho_\textnormal{ns} =0$, one  has $\textnormal{Cor}(\omega_\infty^{\textnormal{ns}})=\textnormal{Cor}(\omega_\infty^{(0)})=0$. For $\varrho_\textnormal{ns} >0$,
\begin{equation*}
\begin{split}
\textnormal{Cor}(\omega_\infty^{\textnormal{ns}}) & = \int_0^\infty \varrho_\textnormal{ns} \ee^{-\varrho_\textnormal{ns} \tau} \omega_{\{L,R\},\tau}^{(0)} ((1,1)) \dd \tau \\
& \hphantom{=} - \int_0^\infty \varrho_\textnormal{ns} \ee^{-\varrho_\textnormal{ns} \tau} \omega_{\{L\},\tau}^{(0)} (1) \dd \tau \int_0^\infty \varrho_\textnormal{ns} \ee^{-\varrho_\textnormal{ns} \sigma} \omega_{\{R\},\sigma}^{(0)} (1) \dd \sigma.
\end{split}
\end{equation*} 
\end{lemma}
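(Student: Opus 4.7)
The plan is to apply the recursion \eqref{rec_general} once to $\omega^{\nosep}$ and then send $t \to \infty$. Since $\rhosep = 0$, the only active recombinations separate $\{L,R\}$ as an intact block from $i_\bullet$, at combined rate $\rhonosep$. Whenever $\rhonosep > 0$, the set $\{L,R\}$ is a valid tail $D_i$ for a unique $i \in S^\circ$ (this rules out $i_\bullet = 2$, for which $\rhonosep = 0$ and no such $i$ exists). I choose the nondecreasing relabeling so that this $i$ appears first, yielding $\varrho^{(1)} = \rhonosep$, $D^{(1)} = \{L,R\}$, and $C^{(1)} = \{i_\bullet\}$; all subsequent steps use rates drawn from $\rhosep = 0$ and act as the identity. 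Equation \eqref{rec_general} then gives
\begin{equation*}
\omega^{\nosep}_t = \ee^{-\rhonosep t}\, \omega^{(0)}_t + \omega^{(0)}_{\{i_\bullet\},t} \otimes \int_0^t \rhonosep\, \ee^{-\rhonosep \tau}\, \omega^{(0)}_{\{L,R\},\tau}\, \dd \tau.
\end{equation*}

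Next, I send $t \to \infty$. The first summand decays exponentially, while in the second, \eqref{omega0} together with $x^{\text{mut}}_{i_\bullet} = 0$ gives $\omega^{(0)}_{\{i_\bullet\},t} \to \delta_0$, and dominated convergence (with integrable majorant $\rhonosep\, \ee^{-\rhonosep \tau}$) extends the integral to $[0,\infty)$. Hence $\omega^{\nosep}_\infty$ is the product measure $\delta_0 \otimes \mu$, whose $\{L,R\}$-marginal equals $\mu = \int_0^\infty \rhonosep\, \ee^{-\rhonosep \tau}\, \omega^{(0)}_{\{L,R\},\tau}\, \dd \tau$. Inserting the joint and single-site marginals of $\mu$ into \eqref{lddefinition}, and exchanging the single-site marginalisation with the integral by linearity, yields the claimed identity.

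The case $\rhonosep = 0$ is immediate: then $\omega^{\nosep} = \omega^{(0)}$, whose limit $\delta_{x^{\text{mut}}}$ is a point measure and therefore has vanishing correlation. I anticipate no substantive obstacle; the only bit of bookkeeping is checking that in the $\rhonosep > 0$ case, the single relevant recombination event really has tail $\{L,R\}$, which one reads off directly from the geometry in Fig.~\ref{SLR}.
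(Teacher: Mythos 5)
Your proposal is correct and follows essentially the same route as the paper's own proof, which likewise invokes the recursion of Theorem~\ref{recursion} (resp.\ \eqref{rec_general} for $i_\bullet=3$), lets $t\to\infty$, and marginalises; you merely spell out the details (the choice of relabelling putting the non-separating crossover first, the identification $\omega^{\nosep}=\omega^{(1)}$, and the limit $\omega^{\nosep}_\infty=\delta_0\otimes\mu$) that the paper leaves implicit. The paper additionally sketches an alternative genealogical argument via the AIG, which you do not need.
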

\begin{proof}
For $\varrho_\textnormal{ns} =0$, one has $\omega^{\nosep}_{}=\omega^{(0)}$ and the case is clear due to \eqref{omega0}.  For $\varrho_\textnormal{ns} > 0$ and $i_\bullet=1$, the claim 
follows immediately from Theorem~\ref{recursion} by letting $t \to \infty$ and  marginalisation; this carries over to  $i_\bullet=3$ via symmetry (or via \eqref{rec_general}). Alternatively, we can argue via the AIG: as $t$ tends to infinity, the age of the line ancestral to $\{L,R\}$, which is never split since $\rhosep = 0$, is exponentially distributed with mean $1 / \varrho_\textnormal{ns}$; hence the type of this line is sampled from $\omega_{\{L,R\}}^{(0)}$, evaluated at this exponential time.
\end{proof}

The numerical solution of $\big (\textnormal{Cor}(\omega_t^{\textnormal{ns}}) \big )_{t\geqslant 0}$  is shown in Fig.~\ref{LDplot} (a) for   various values of  $\rhonosep$. 
Let us start with the curve for $\rhonosep=0$, with its initial buildup of LD  followed by a decay to zero due to fixation of  the original `tail' $(x^{\text{mut}}_L,x^{\text{mut}}_R)=(1,1)$ together with $x_{i_\bullet}=0$; we take this as our `reference curve'. In contrast, for $\rhonosep>0$, the correlation is expected to remain positive in the long run, a phenomenon that can be explained from two different angles; see also the discussion surrounding Lemma~\ref{restrictionislinear} and Figure~\ref{aigpartition}.
\begin{enumerate}
\item
Forward in time, as the mutation at the selected site goes to fixation, the original tail associated with this mutant is replaced by a random sample 
$(x_L^{},x_R^{}) \in \{0,1\}^2$ from the population at the (random) time $\theta \sim \text{Exp}(\rhonosep)$.
\item
From a genealogical perspective, as we trace back the ancestral line of the neutral sites $L$ and $R$ (keep in mind that $L$ and $R$ remain glued together due to 
$\rhosep = 0$), this line is repeatedly (at rate $\rhonosep$) split from the ancestral line of $i_\bullet$, so that $L$ and $R$ only hitchike along with the orginal instance of $i_\bullet$ for the random time $\theta$ as above.
\end{enumerate}

In any case, $\textnormal{Cor}(\omega_t^{\textnormal{ns}})$ `decouples' from our reference curve $\textnormal{Cor}(\omega_t^{(0)})$  at roughly $\EE [\theta]$ (we will discuss the nature of this approximation below) and then remains constant since $\rhosep=0$; this happens the sooner the larger $\rhonosep$ since $\EE[\theta] = 1 / \rhonosep$. For  $\rhonosep = 0.7s$,  it happens almost instantly, so the correlation is nearly constant for all times; in fact, it is immediate from Lemma~\ref{omegansasymptotics} that $\textnormal{Cor}(\omega_\infty^{\textnormal{ns}})$ approaches $\textnormal{Cor}(\omega_0^{})$  as
$\varrho_\text{ns}^{} \to \infty$. In any case, the asymptotic distribution is not a point measure; and since the sample starts with positive LD  by assumption,  LD is preserved while $x_{i_\bullet}=0$ sweeps to fixation, whence we expect a non-zero limit of the correlation as $t \to \infty$. 

As our reference curve  $\textnormal{Cor}(\omega_t^{(0)})$ is not monotonic in time, the above `decoupling' mechanism implies that $\textnormal{Cor}(\omega_\infty^{\textnormal{ns}})$  is not monotonic in  $\varrho_{\textnormal{ns}}$. Roughly speaking, $\textnormal{Cor}(\omega_\infty^{\textnormal{ns}})$ will be maximal if the ancestral line of $\{L,R\}$ decouples from $i_\bullet^{}$ at the time of maximal LD in the reference curve. In our example, this will be around $t=1000$, which leads to the naive estimate
\begin{equation*}
\argmax_{\rhonosep} \omega_\infty^{\textnormal{ns}} \approx 10^{-3}.
\end{equation*}
A numerical analysis shows that the true value is $\rhonosep \approx 8.8 \cdot 10^{-4}$. The reason for the deviation  is twofold. The naive estimate implies two approximation steps:
\begin{equation}\label{approx}
\textnormal{Cor}(\omega_\infty^{\textnormal{ns}}) \approx \EE[\textnormal{Cor}(\omega_\theta^{(0)})] \approx \textnormal{Cor}(\omega_{\EE(\theta)}^{(0)}).
\end{equation}
The second approximation refers to the fact that $\textnormal{Cor}(\omega_t^{\textnormal{ns}})$ does not  `decouple' from our reference curve $\textnormal{Cor}(\omega_t^{(0)})$ precisely at $\EE[\theta]$, but at exponentially distributed  times, which makes a difference  due to the nonlinearity of $\textnormal{Cor}(\omega_\theta^{(0)})$ as a function of $\theta$.
However, the middle expression in \eqref{approx}  is still an approximation to the limiting value of the correlation; on the level of the formula  in Lemma~\ref{omegansasymptotics}, it amounts to replacing the double integral by the simple integral
\begin{equation*}
\int_0^\infty \rhonosep \ee^{-\rhonosep \tau} \omega_{\{L\},\tau}^{(0)} (1) \omega_{\{R\},\tau}^{(0)} (1) \dd \tau.
\end{equation*}
The deeper reason is that LD is a \emph{second}-order quantity involving two individuals, so that considering only the ancestry of a single individual can only yield a heuristic.

We now consider the effect of separating recombination.
\begin{prop}\label{separatingrecombination}
Let $\rhosep$ be as in \eqref{rhosepnosep}. 
Then,
\begin{equation*}
\textnormal{Cor}(\omega_t^{}) = \ee^{-\rhosep t} \textnormal{Cor} (\omega_t^{\nosep}).
\end{equation*}
\end{prop}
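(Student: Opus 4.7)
The plan is to apply the recursion \eqref{rec_general} iteratively, isolating the effect of each recombination that contributes to $\rhosep$. Fix a nondecreasing relabelling $(i_0, i_1, i_2)$ of $S$ with $i_0 = i_\bullet$. In each of the three cases (cf.\ Fig.~\ref{SLR}), one may arrange the relabelling so that the sites contributing to $\rhonosep$ (if any) precede those contributing to $\rhosep$: for $i_\bullet \in \{1, 3\}$ the unique non-separating site is the one adjacent to $i_\bullet$ in the partial order $\preccurlyeq$, while for $i_\bullet = 2$ there are no non-separating sites. Writing $k_0 \in \{0,1\}$ for the number of non-separating sites, this gives $\omega^{(k_0)} = \omega^{\nosep}$, and for $k > k_0$ the step from $\omega^{(k-1)}$ to $\omega^{(k)}$ adds a single separating recombination at rate $\varrho^{(k)}$, with $\sum_{k = k_0 + 1}^{n-1} \varrho^{(k)} = \rhosep$.

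The crux is a single-step identity: if $\omega' = \omega^{(k)}$ is obtained from $\omega = \omega^{(k-1)}$ via \eqref{rec_general} with rate $\rho = \varrho^{(k)}$ and head/tail $(C, D) = (C^{(k)}, D^{(k)})$ satisfying $\{L\} = \{L,R\} \cap C$ and $\{R\} = \{L,R\} \cap D$ (the reverse orientation is symmetric), then $\textnormal{Cor}(\omega'_t) = \ee^{-\rho t} \textnormal{Cor}(\omega_t)$. To show this, I marginalise the recursion at $\{L,R\}$, $\{L\}$, and $\{R\}$; using $\int_0^t \rho \ee^{-\rho \tau} \dd \tau = 1 - \ee^{-\rho t}$ and the fact that marginalising the product measure $\omega_{C,t} \otimes \omega_{D,\tau}$ at a subset of $C$ (resp.\ $D$) simply picks out the $C$- (resp.\ $D$-)factor, one obtains
\begin{align*}
\omega'_{\{L,R\},t}((1,1)) &= \ee^{-\rho t}\, \omega_{\{L,R\},t}((1,1)) + \omega_{\{L\},t}(1) \int_0^t \rho \ee^{-\rho \tau}\, \omega_{\{R\},\tau}(1)\, \dd \tau, \\
\omega'_{\{L\},t}(1) &= \omega_{\{L\},t}(1), \\
\omega'_{\{R\},t}(1) &= \ee^{-\rho t}\, \omega_{\{R\},t}(1) + \int_0^t \rho \ee^{-\rho \tau}\, \omega_{\{R\},\tau}(1)\, \dd \tau.
\end{align*}
Substituting into $\textnormal{Cor}(\omega'_t) = \omega'_{\{L,R\},t}((1,1)) - \omega'_{\{L\},t}(1)\, \omega'_{\{R\},t}(1)$, the two copies of $\omega_{\{L\},t}(1) \int_0^t \rho \ee^{-\rho \tau} \omega_{\{R\},\tau}(1)\, \dd \tau$ cancel exactly, leaving $\ee^{-\rho t} \textnormal{Cor}(\omega_t)$.

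Iterating this identity across all separating steps yields $\textnormal{Cor}(\omega_t) = \prod_{k > k_0} \ee^{-\varrho^{(k)} t}\, \textnormal{Cor}(\omega^{\nosep}_t) = \ee^{-\rhosep t}\, \textnormal{Cor}(\omega^{\nosep}_t)$, as claimed. The only conceptually subtle point is the invariance $\omega'_{\{L\},t} = \omega_{\{L\},t}$ under a step with $L$ on the $i_\bullet$-side; genealogically, it says that the ancestral line of $L$ remains glued to $i_\bullet$ across such a step, and it is exactly this invariance that drives the cross-term cancellation. The single-step identity itself admits the clean genealogical reading that a separating step severs the shared $\{L,R\}$-line at exponential rate $\rho$: with probability $\ee^{-\rho t}$ no severing occurs and the LD is preserved, whereas on the complementary event the joint of $(x_L, x_R)$ factorises and contributes nothing to the covariance once the cross-terms cancel.
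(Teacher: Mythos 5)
Your proof is correct, but it takes a genuinely different route from the paper's. The paper disposes of the proposition in three sentences by arguing directly on the AIG: the line ancestral to $\{L,R\}$ is hit at rate $\rhosep$ by a splitting that separates $L$ from $R$; with probability $\ee^{-\rhosep t}$ this never happens and the delivered type is that of $\omega^{\nosep}_t$, while on the complementary event the alleles at $L$ and $R$ come from independent ancestors and hence contribute no correlation. You instead work forward from the recursion \eqref{rec_general}: you order the relabelling so that the non-separating site (if any) is incorporated first, identify $\omega^{(k_0)}$ with $\omega^{\nosep}$, and then verify a one-step identity $\textnormal{Cor}(\omega'_t)=\ee^{-\rho t}\textnormal{Cor}(\omega_t)$ by marginalising at $\{L,R\}$, $\{L\}$, $\{R\}$ and watching the cross terms cancel. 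The computation checks out in all three geometries of Fig.~\ref{SLR} (including $i_\bullet=2$, where both steps are separating and the identity is applied twice with opposite orientations). What your version buys is an explicit, self-contained verification: the paper's ``uncorrelated on the complementary event'' step, read literally as a conditional-covariance decomposition, leaves a cross term of the form $p(1-p)\bigl(\EE[x^{}_L\mid A]-\EE[x^{}_L\mid A^c]\bigr)\bigl(\EE[x^{}_R\mid A]-\EE[x^{}_R\mid A^c]\bigr)$ whose vanishing rests on the (implicit) independence of the reset ages of $L$ and $R$ --- a point that is genuinely nontrivial when $i_\bullet$ lies between the neutral sites --- whereas your cancellation is exact and mechanical. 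What the paper's version buys is brevity and the genealogical picture, which you recover anyway in your closing remarks; in particular your observation that $\omega'_{\{L\},t}=\omega_{\{L\},t}$ for the head-side site is precisely the algebraic shadow of the statement that the ancestral line of the head-side neutral site is untouched by the splitting being added.
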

\begin{proof}
We argue via the AIG. The line ancestral to $\{L,R\}$ is hit at rate $\rhosep$ by a splitting event that separates the ancestral lines of $L$ and $R$. Thus, with probability $\ee^{-\rhosep t}$, no such splitting occurs on the ancestral line of $L$ and $R$, and thus the type agrees with the one  delivered by an AIG with $\rhosep=0$; its distribution is $\omega_t^{\nosep}$. On the other hand, if such a splitting has ocurred, the alleles at sites $L$ and $R$ are sampled independently, and are thus uncorrelated.
\end{proof}

The behaviour    is shown in Fig.~\ref{LDplot} (b) for the values of  $\rhonosep$ used in panel (a) --- in line with Proposition~\ref{separatingrecombination}, it is obtained by multiplying the functions in panel (a) with an  exponentially decaying factor. The resulting picture resembles Fig.~2 of~\citet{stephansonglangley}.  

\section{Outlook}
\label{sec:outlook}
Let us close by mentioning possible extensions as well as limitations of our approach. The properties of the selection term that enabled the recursive solution of the selection-recombination equation are satisfied more generally. Put informally, what is required is that this term only affects a single locus. Thus, extensions to more general selection, in particular frequency-dependent selection (of which  diploid selection under dominance is a special case), as well as mutation, can be treated in this way. We defer these treatments to forthcoming work.
In contrast, the methods presented here and in~\cite{Selrek} break down if one tries to incorporate multiple selected sites; new ideas must then be sought, see~\cite{canada} and the corresponding erratum. Also, the presented approach relies on the (conditional) independence of ancestral lines separated by recombination, which is destroyed by coalescence events that occur in the setting of finite populations or scalings other than the law of large numbers regime considered in this work.

\section*{Acknowledgements}
This work was funded by the German Research Foundation (Deutsche
Forschungsgemeinschaft, DFG) --- SFB 1283/2 2021 --- 317210226.

\bigskip

\bigskip

\bigskip
\end{document}